\newtheorem{theorem}{Theorem}
\newtheorem{remark}{Remark}
\begin{document}

\title{GLASD: A Loss-Function-Agnostic Global Optimizer for Robust Correlation Estimation under Data Contamination and Heavy Tails}

\author{Priyam Das\\
~\IEEEmembership{Assistant Professor, Department of Biostatistics, Virginia Commonwealth University, Richmond, Virginia, USA}
\thanks{This paper was produced by the IEEE Publication Technology Group. They are in Piscataway, NJ.}
\thanks{Manuscript received June XX, 2025; revised August XX, 2025.}}

\markboth{Journal of \LaTeX\ Class Files,~Vol.~xx, No.~xx, August~2025}%
{Shell \MakeLowercase{\textit{et al.}}: A Sample Article Using IEEEtran.cls for IEEE Journals}


\maketitle

\begin{abstract}
Robust correlation estimation is essential in high-dimensional settings, particularly when data are contaminated by outliers or exhibit heavy-tailed behavior. Many robust loss functions of practical interest—such as those involving truncation or redescending M-estimators—lead to objective functions that are inherently non-convex and non-differentiable. Traditional methods typically focus on a single loss function tailored to a specific contamination model and develop custom algorithms tightly coupled with that loss, limiting generality and adaptability. We introduce GLASD (Global Adaptive Stochastic Descent), a general-purpose black-box optimization algorithm designed to operate over the manifold of positive definite correlation matrices. Unlike conventional solvers, GLASD requires no gradient information and imposes no assumptions of convexity or smoothness, making it ideally suited for optimizing a wide class of loss functions—including non-convex, non-differentiable, or discontinuous objectives. This flexibility allows GLASD to serve as a unified framework for robust estimation under arbitrary user-defined criteria. We demonstrate its effectiveness through extensive simulations involving contaminated and heavy-tailed distributions, as well as a real-data application to breast cancer proteomic network inference, where GLASD successfully identifies biologically plausible interactions despite the presence of outliers. The proposed method is scalable, constraint-aware, and available as open-source software at GitHub.
\end{abstract}

\begin{IEEEkeywords}
Robust correlation estimation, Manifold-constrained optimization, Non-smooth optimization, Truncated loss, Adaptive Stochastic Descent, Evolutionary algorithm.
\end{IEEEkeywords}

\section{Introduction}\label{sec:intro}
\IEEEPARstart{E}{stimation} of correlation matrices is a foundational task in multivariate analysis, with wide-ranging applications in genomics, finance, neuroimaging, and network biology~\cite{shafer2015genomics, bickel2008covariance, smith2011network, fan2016overview}. These domains frequently involve high-dimensional datasets that may be contaminated by outliers, corrupted by measurement error, or drawn from heavy-tailed distributions~\cite{lam2016robust}. Classical estimators—such as the sample covariance or correlation matrix—are highly sensitive to such deviations, underscoring the need for robust correlation estimation methods that remain reliable under data irregularities.

Robust correlation estimation typically relies on minimizing loss functions that reduce the influence of extreme observations. Examples include the Huber loss~\cite{huber1964robust, fan2018robust}, which smoothly transitions between squared and absolute loss; Tukey’s biweight loss~\cite{beaton1974,hardin2007robust}, a bounded loss function whose corresponding influence function assigns zero weight to extreme values, thereby suppressing their impact on estimation. While these loss functions enhance robustness, they also pose significant optimization challenges: the resulting objectives are often non-convex, non-differentiable, and must be minimized over the space of symmetric, positive definite correlation matrices with unit diagonals. Optimization over such constrained manifolds is an active area of research~\cite{absil2009optimization, boumal2023introduction}, yet standard gradient-based or projection-based solvers often struggle when the objective is nonsmooth or discontinuous.

In the context of model-based robust correlation matrix estimation, most existing approaches tailor their algorithms to a specific choice of loss function and structural assumption (e.g., \cite{fan2018robust}), leading to limited flexibility and generalizability. This makes it difficult for practitioners to experiment with alternative robustness criteria or adapt methods to new data domains. In this work, we introduce \textit{Global Adaptive Stochastic Descent} (GLASD), a general-purpose black-box optimization algorithm designed to robustly estimate correlation matrices under arbitrary user-specified loss functions. The method is fully gradient-free and combines adaptive stochastic descent~\cite{kerr2018optimization} with global exploration strategies inspired by simulated annealing~\cite{geman1984stochastic, hajek1988cooling}, making it well-suited for complex constrained spaces with rugged landscapes. Unlike conventional solvers, GLASD requires no convexity or smoothness assumptions, enabling it to optimize a wide range of objectives—including discontinuous and highly non-convex loss functions—on the manifold of positive definite correlation matrices.

Theoretical contributions of our work include a bijective parameterization of the correlation matrix space via a hyperspherical coordinate representation of its Cholesky factor. This formulation enables efficient, projection-free optimization while preserving symmetry, positive definiteness, and unit diagonals. We also establish global convergence guarantees under mild regularity conditions.

Beyond theory, we demonstrate the practical utility of GLASD through extensive simulation studies involving multiple contamination types and correlation structures. Finally, we apply GLASD to estimate robust correlation matrices for four major pathway-related proteins in breast cancer, using data from The Cancer Proteome Atlas (TCPA)~\cite{li2013tcpa}. These proteomic measurements exhibit outliers, making robust estimation essential. GLASD successfully uncovers biologically meaningful network structures, revealing insightful inter- and intra-pathway correlation patterns.

The rest of the paper is organized as follows. In Section~\ref{sec:problem}, we introduce the robust correlation matrix estimation problem under several loss functions. In Section~\ref{sec:GLASD}, we introduce the GLASD algorithm in general form. Section~\ref{sec:GLASD_theory} presents theoretical results that establish the global convergence guarantees of GLASD. In Section~\ref{sec:correlation_optimization}, we develop the Cholesky-hyperspherical parameterization of correlation matrices and adapt GLASD to this constrained manifold. Results from the benchmark evaluation and simulation experiments are given in Section~\ref{sec:GLASD_benchmark}, and ~\ref{sec:simulation_study}, respectively, followed by a real-data application in Section~\ref{sec:TCPA_analysis}. Finally we conclude with discussion in Section~\ref{sec:conclusion}.

\section{Robust Correlation Estimation via Loss-based Mahalanobis Distance Modeling}\label{sec:problem}

\noindent Estimating the correlation matrix of a multivariate distribution is central to many downstream tasks such as clustering, dimensionality reduction, graphical modeling, and portfolio optimization. Unlike the covariance matrix, the correlation matrix provides a scale-invariant summary of pairwise dependencies, which is especially useful in heterogeneous, high-dimensional settings. However, the sample correlation matrix is often unstable in high dimensions, particularly under limited sample sizes, outliers, or heavy-tailed noise. These challenges have motivated model-based estimation strategies, such as those based on Gaussian likelihood, where the correlation matrix is inferred through structured optimization.

A principled approach to correlation matrix estimation often begins with the assumption that observations $\{\boldsymbol{x}_i\}_{i=1}^n \subset \mathbb{R}^p$ arise from a multivariate normal distribution with zero mean and unknown correlation matrix $\boldsymbol{C}$. In this setting, we assume that the observed data have been standardized to have mean zero and unit variance, so that the matrix $\boldsymbol{C}$ captures the correlation structure. Under this model, the negative log-likelihood function (up to an additive constant) takes the following form. 
\begin{equation}
    \ell_{\mathrm{Gauss}}(\boldsymbol{C}) = \frac{n}{2} \log\det(\boldsymbol{C}) + \frac{1}{2} \sum_{i=1}^n \boldsymbol{x}_i^\top \boldsymbol{C}^{-1} \boldsymbol{x}_i,
    \label{eq:gaussian_likelihood}
\end{equation}
where the constraint $\text{diag}(\boldsymbol{C}) = \boldsymbol{1}$ ensures unit variances, and $\boldsymbol{C} \succ 0$ enforces positive definiteness. The second term, based on squared Mahalanobis distances $d_i^2 = \boldsymbol{x}_i^\top \boldsymbol{C}^{-1} \boldsymbol{x}_i$, measures the discrepancy of each observation from the origin relative to the geometry induced by $\boldsymbol{C}$.

While optimal under Gaussianity, this estimator is highly sensitive to outliers and heavy-tailed behavior—conditions frequently encountered in genomics, finance, and biomedical network modeling~\cite{fan2018robust, lam2016robust}. Even a single contaminated observation can yield inflated Mahalanobis distances, destabilizing the entire correlation estimate. Therefore, to improve robustness, several alternatives have been proposed that replace the squared distance term with bounded or redescending loss functions. One such method is the \emph{Huber loss}~\cite{fan2018robust}, which behaves quadratically for small residuals and linearly for large ones:
\begin{equation}
\rho_{\delta}^{\mathrm{Huber}}(d^2) =
\begin{cases}
    d^2, & \text{if } d^2 \leq \delta, \\
    2\delta^{1/2} d - \delta, & \text{if } d^2 > \delta.
\end{cases}
\end{equation}
This yields the robustified log-likelihood (up to an additive constant):
\begin{equation}
    \ell_{\mathrm{Huber}}(\boldsymbol{C}) = \frac{n}{2} \log\det(\boldsymbol{C}) + \frac{1}{2} \sum_{i=1}^n \rho_{\delta}^{\mathrm{Huber}}\left( \boldsymbol{x}_i^\top \boldsymbol{C}^{-1} \boldsymbol{x}_i \right),
    \label{eq:huber_likelihood}
\end{equation}
where $\delta > 0$ is a tuning parameter controlling the threshold between quadratic and linear behavior.

To address more severe contamination, we also consider the \emph{truncated loss}, which entirely caps large residuals:
\begin{equation}
    \ell_{\mathrm{Trunc}}(\boldsymbol{C}) = \frac{n}{2} \log\det(\boldsymbol{C}) + \frac{1}{2} \sum_{i=1}^n \min\left\{ \boldsymbol{x}_i^\top \boldsymbol{C}^{-1} \boldsymbol{x}_i, \, \tau \right\},
    \label{eq:truncated_loss}
\end{equation}
where $\tau > 0$ is a fixed truncation threshold. This formulation discards the influence of observations whose Mahalanobis distance exceeds $\tau$, offering a stricter form of robustness.

Finally, we explore the \emph{Tukey’s biweight loss}~\cite{beaton1974}, a redescending function that limits large deviations and smoothly downweights moderate outliers. Its form is:
\begin{equation}
\rho_{\tau}^{\mathrm{Tukey}}(d^2) =
\begin{cases}
    \frac{\tau^2}{6} \left[1 - \left(1 - \frac{d^2}{\tau^2}\right)^3 \right], & \text{if } d^2 \leq \tau^2, \\
    \frac{\tau^2}{6}, & \text{if } d^2 > \tau^2.
\end{cases}
\end{equation}
The resulting objective becomes:
\begin{equation}
    \ell_{\mathrm{Tukey}}(\boldsymbol{C}) = \frac{n}{2} \log\det(\boldsymbol{C}) + \frac{1}{2} \sum_{i=1}^n \rho_{\tau}^{\mathrm{Tukey}}\left( \boldsymbol{x}_i^\top \boldsymbol{C}^{-1} \boldsymbol{x}_i \right).
    \label{eq:tukey_likelihood}
\end{equation}

Each of these loss functions provides a different trade-off between statistical efficiency and robustness. However, they all result in non-convex, often non-smooth objective functions defined over the manifold of correlation matrices (i.e., symmetric, positive definite matrices with unit diagonal). These properties render classical optimization approaches—such as gradient-based or projection-based schemes—ineffective or computationally prohibitive.

To address this, we introduce \emph{GLASD (Global Adaptive Stochastic Descent)}, a general-purpose black-box optimization algorithm capable of handling arbitrary user-defined loss functions, regardless of smoothness or convexity. GLASD navigates the constrained space of correlation matrices using adaptive stochastic exploration, enabling robust estimation in high-dimensional, contaminated settings without requiring gradient information or problem-specific tuning.

\section{Global Adaptive Stochastic Descent (GLASD)}\label{sec:GLASD}

\noindent Robust correlation estimation involves optimizing non-convex, potentially non-smooth objectives over the manifold of symmetric positive definite matrices with unit diagonal constraints. As discussed in the previous section, standard solvers—whether gradient-based or manifold-constrained—are often ineffective in this setting due to discontinuities, flat regions, or the difficulty of enforcing structural constraints during optimization.

To overcome these limitations, we introduce \emph{Global Adaptive Stochastic Descent (GLASD)}, a randomized, black-box optimization algorithm tailored for solving non-smooth and non-convex problems over compact hyperrectangular domains. Although our eventual goal is to minimize robust loss functions over the space of correlation matrices, we first formulate GLASD in a general hyperrectangular setting. This abstraction is justified by the fact that correlation matrices can be bijectively parameterized via a constrained Euclidean representation, as discussed later in Section~\ref{sec:correlation_optimization}.

Let \( f: D \subset \mathbb{R}^n \to \mathbb{R} \) denote the objective function, where \( D = \prod_{i=1}^n [a_i, b_i] \) is a compact hyperrectangle obtained through such transformation. GLASD maintains a current iterate \( x_t \in D \) and associates with each coordinate-direction \( j \in \{1, \dots, 2n\} \) a step size \( s_j^{(t)} > 0 \) and a directional selection probability \( p_j^{(t)} > 0 \), normalized so that \( \sum_j p_j^{(t)} = 1 \). The directions \( j = 2i - 1 \) and \( j = 2i \) represent the positive and negative perturbations along coordinate \( i \), respectively. Suppose, $s_{\text{inc}} > 1$ and $s_{\text{dec}} > 1$ denote the multiplicative factors used to increase or decrease the coordinate-wise step sizes, while $p_{\text{inc}} > 1$ and $p_{\text{dec}} > 1$ control the corresponding updates to directional selection probabilities. At each iteration \( t \), GLASD proceeds in one of two modes:

\vspace{0.2cm}
\noindent \textbf{Greedy Descent Mode}: With probability \( 1 - \frac{1}{m} \), a direction \( j \) is sampled according to \( p^{(t)} \). Let \( i = \lceil j/2 \rceil \) and set the sign to \( +1 \) if \( j \) is odd, and \( -1 \) otherwise. A step of magnitude \( s_j^{(t)} \) is proposed in the corresponding direction and clipped to ensure feasibility:
\[
\delta[i] = \begin{cases}
    \min(s_j^{(t)}, \tfrac{b_i - x_t[i]}{2}) & \text{if } \text{sign} = +1, \\
    -\min(s_j^{(t)}, \tfrac{x_t[i] - a_i}{2}) & \text{if } \text{sign} = -1,
\end{cases}
\]
with \( \delta[k] = 0 \) for all \( k \ne i \). If the proposed point satisfies \( f(x_t + \delta) < f(x_t) \), the update is accepted:
\[
x_{t+1} \gets x_t + \delta, \quad s_j^{(t+1)} \gets s_j^{(t)} \cdot s_{\text{inc}}, \quad p_j^{(t+1)} \gets p_j^{(t)} \cdot p_{\text{inc}},
\]
followed by normalization of the updated probability vector. Otherwise, the step is rejected and parameters are decayed:
\[
s_j^{(t+1)} \gets s_j^{(t)} / s_{\text{dec}}, \quad p_j^{(t+1)} \gets p_j^{(t)} / p_{\text{dec}},
\]
again, followed by normalization of the updated probability vector.

\vspace{0.2cm}
\noindent \textbf{Forced Exploration Mode}: With probability \( \frac{1}{m} \), the algorithm performs a random exploration step. A coordinate \( i \in \{1, \dots, n\} \) is chosen uniformly, a sign is sampled from \( \{+1, -1\} \), and a step magnitude \( s \sim \mathcal{U}(0, r) \) is drawn, where \( r > 0 \) is a fixed upper bound controlling the maximum exploration radius. Let \( t \) denote the current iteration number. The step is again clipped:
\[
\delta[i] = \begin{cases}
    \min(s, \tfrac{b_i - x_t[i]}{2}) & \text{if } \text{sign} = +1, \\
    -\min(s, \tfrac{x_t[i] - a_i}{2}) & \text{if } \text{sign} = -1,
\end{cases}
\]
with \( \delta[k] = 0 \) for all \( k \ne i \). The move is accepted if the objective function value decreases at the proposed point. Otherwise, if the objective value increases, the move is accepted with probability
\[
q_t = \min\left(1, \frac{m c}{\log(1 + t)}\right),
\]
where \( c > 0 \) controls the exploration temperature, analogous to the role of a cooling schedule in simulated annealing~\cite{hajek1988cooling}. This particular form of the acceptance probability facilitates global convergence guarantees, as shown later in Section~\ref{sec:GLASD_theory}. The forced exploration step allows GLASD to escape local optima in case it becomes trapped in one.

GLASD maintains a running minimum \( f^{\text{best}}_t = \min \{f(x_0), \dots, f(x_t)\} \), and terminates if either a maximum number of iterations \( T \) is reached or if no significant improvement (\( > \epsilon \)) is observed over the last \( M \).

GLASD is projection-free and zeroth-order: all updates are feasible by construction, and only function evaluations are required. GLASD is inspired by \emph{Adaptive Stochastic Descent (ASD)}~\cite{kerr2018optimization}, but differs in three critical aspects: (i) it is designed for compact domains unlike ASD which is on unconstrained domain, (ii) it introduces exploration via probabilistic uphill moves, and (iii) it comes with theoretical convergence guarantees under mild assumptions. The GLASD algorithm on hyper-rectangle is detailed below in Algorithm \ref{alg:glasd}.

\begin{algorithm}[H]
\caption{Global Adaptive Stochastic Descent (GLASD)}
\begin{algorithmic}
\STATE \textsc{GLASD} optimizes $f(x)$ on $x\in D = \prod_{i=1}^n [a_i, b_i]$
\STATE \textbf{User input:} Initial point $x_0 \in D$; $\{a_i,b_i\}$ for $i = 1, \dots, n$ s.t. $a_i < b_i$.
\STATE \textbf{Optional input (else use default):} Step sizes $s_j > 0$ and probabilities $p_j > 0$ for $j = 1, \dots, 2n$ with $\sum_j p_j = 1$; $s_{\text{inc}}, s_{\text{dec}} > 1$; $p_{\text{inc}}, p_{\text{dec}} > 1$; $m \in \mathbb{N}$; $r > 0$; $c > 0$; maximum iterations $T$; stagnation window $M$; tolerance $\epsilon$.
\vspace*{0.01em}
\STATE \textbf{Algorithm:} Initialize $x \gets x_0$, $E \gets f(x)$, $E^{\text{best}} \gets E$, buffer $B \gets [E^{\text{best}}]$
\FOR{$k = 1$ to $T$}
\STATE \hspace{0.5cm} $q_t \gets min(1, mc/\log(1 + k))$
\STATE \hspace{0.5cm} \textbf{if} Bernoulli$(1 - 1/m)$
\STATE \hspace{1cm} \textit{explore} $\gets$ \textbf{false}
\STATE \hspace{1cm} Sample $j \in \{1,\dots,2n\}$ by $p$; $i \gets \lceil j/2 \rceil$
\STATE \hspace{1cm} $\text{sign} \gets +1$ if $j$ odd; $-1$ otherwise
\STATE \hspace{1cm} \textbf{if} $\text{sign} = +1$ \textbf{then} $\delta[i] \gets min(s_j, (b_i - x[i])/2)$
\STATE \hspace{1cm} \textbf{else} $\delta[i] \gets -min(s_j, (x[i] - a_i)/2)$
\STATE \hspace{1cm} Set $\delta[\ell] \gets 0$ for all $\ell \ne i$
\STATE \hspace{0.5cm} \textbf{else}
\STATE \hspace{1cm} \textit{explore} $\gets$ \textbf{true}
\STATE \hspace{1cm} Sample $i \in \{1,\dots,n\}$, $\text{sign} \in \{\pm1\}$, $s \sim \mathcal{U}(0,r)$
\STATE \hspace{1cm} \textbf{if} $\text{sign} = +1$ \textbf{then} $\delta[i] \gets min(s, (b_i - x[i])/2)$
\STATE \hspace{1cm} \textbf{else} $\delta[i] \gets -min(s, (x[i] - a_i)/2)$
\STATE \hspace{1cm} Set $\delta[\ell] \gets 0$ for all $\ell \ne i$
\STATE \hspace{0.5cm} $E_{\text{new}} \gets f(x + \delta)$
\STATE \hspace{0.5cm} \textbf{if} $E_{\text{new}} < E$
\STATE \hspace{1cm} $x \gets x + \delta$, $E \gets E_{\text{new}}$
\STATE \hspace{1cm} \textbf{if} \textit{explore} = \textbf{false}
\STATE \hspace{1.5cm} $s_j \gets s_j \cdot s_{\text{inc}}$, $p_j \gets p_j \cdot p_{\text{inc}}$, normalize $p$
\STATE \hspace{0.5cm} \textbf{else if} \textit{explore} = \textbf{true} \textbf{and} accept with prob. $q_t$
\STATE \hspace{1cm} $x \gets x + \delta$, $E \gets E_{\text{new}}$
\STATE \hspace{0.5cm} \textbf{else if} \textit{explore} = \textbf{false}
\STATE \hspace{1cm} $s_j \gets s_j / s_{\text{dec}}$, $p_j \gets p_j / p_{\text{dec}}$, normalize $p$
\STATE \hspace{0.5cm} $E^{\text{best}} \gets min(E^{\text{best}}, E)$; append to $B$
\STATE \hspace{0.5cm} $x^{\text{best}} \gets x^{E^{\text{best}}}$
\STATE \hspace{0.5cm} \textbf{if} $k \geq M$ \textbf{and} $B[k-M] - B[k] < \epsilon$ \textbf{then break}
\ENDFOR
\STATE $\textbf{return } x^{\text{best}}$\;\;\;\;\\
\end{algorithmic}
\label{alg:glasd}
\end{algorithm}
To offer an early illustration of GLASD’s convergence behavior, we present Figure~\ref{fig:glasd_benchmark}, which compares GLASD against several classical optimization strategies on five canonical test functions: Ackley, Griewank, Rastrigin, Rosenbrock, and Sumsquares~\cite{surjanovic2013virtual, molga2005test}. These benchmark functions, widely used for evaluating optimization algorithms, are all non-convex except for the Sumsquares function, which is convex. All functions are implemented in \textsc{MATLAB}, with consistent domain bounds and problem dimensionality settings as described in \cite{surjanovic2013virtual}.

The competing algorithms include three global optimizers—Simulated Annealing (\texttt{simulannealbnd})~\cite{kirkpatrick1983optimization}, Genetic Algorithm (\texttt{ga})~\cite{holland1992adaptation}, and Particle Swarm Optimization (\texttt{particleswarm})~\cite{kennedy1995particle}—and three local solvers from the \texttt{fmincon} family: Interior-Point, Sequential Quadratic Programming (SQP), and Active-Set~\cite{byrd2000trust, nocedal2006numerical}. We also include Pattern Search (\texttt{patternsearch})~\cite{torczon1997convergence} as a derivative-free local search baseline. The convergence profiles plot the log-scaled objective values versus the number of function evaluations. GLASD demonstrates strong and often superior convergence across all five functions.

These experiments serve as a preliminary benchmark for GLASD on general box-constrained domains. A more comprehensive performance comparison is presented in Section~\ref{sec:correlation_optimization}, where we evaluate GLASD and competing methods on benchmark functions defined over the space of correlation matrices.

\begin{figure*}[t]
  \centering
  \includegraphics[width=0.85\textwidth]{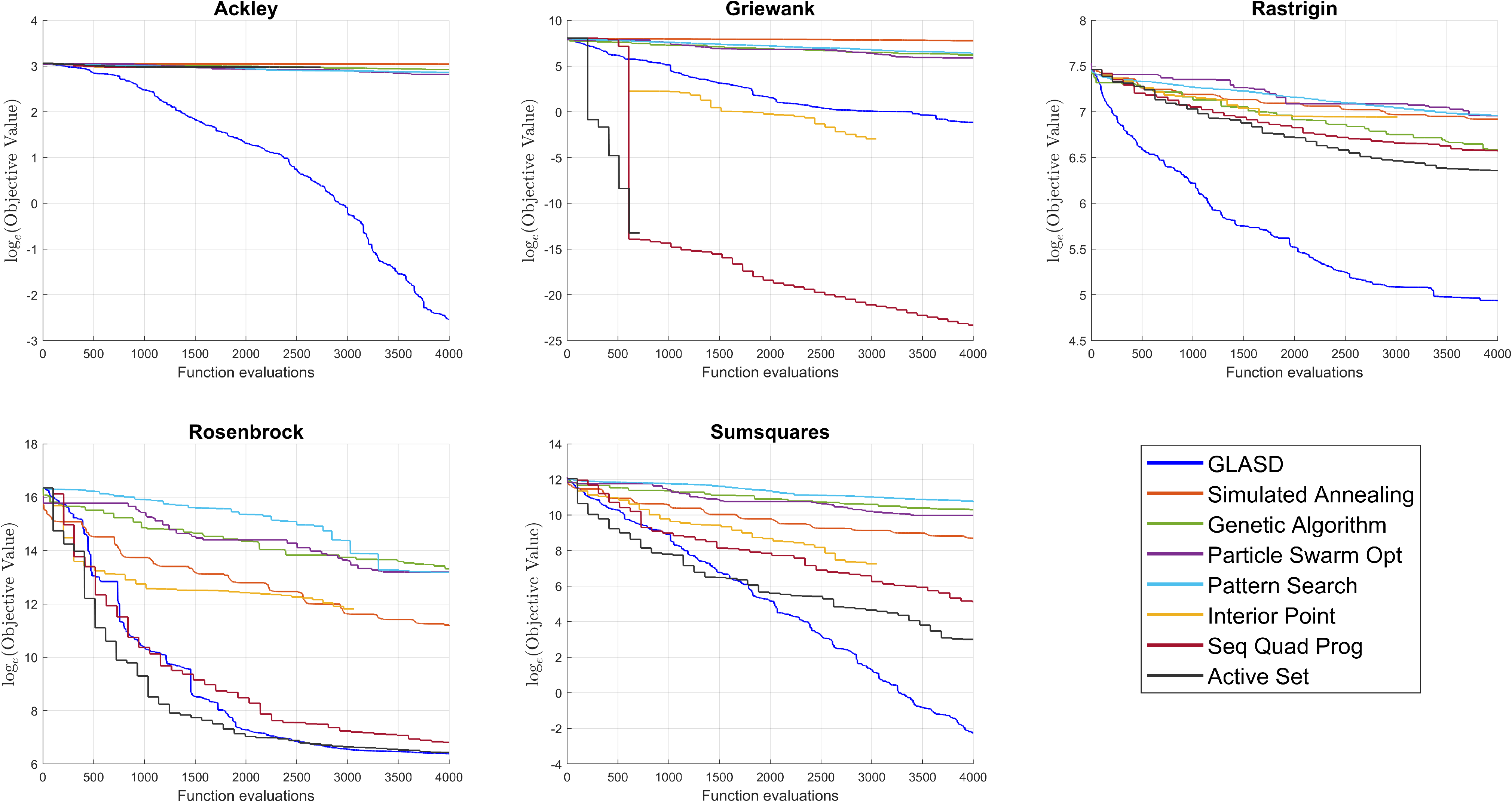}
   \caption{Convergence profiles of GLASD and competing optimization methods on benchmark functions. Log-scaled objective values are plotted against the number of function evaluations.}
  \label{fig:glasd_benchmark}
\end{figure*}

\section{Theoritical Properties}\label{sec:GLASD_theory}
\noindent The Global Adaptive Stochastic Descent (GLASD) algorithm combines greedy local search with probabilistic exploration to ensure robustness in optimizing non-convex functions. We first demonstrate that GLASD converges almost surely to the set of global minimizers under minimal assumptions—namely, continuity of the objective function and boundedness of the domain (Theorem \ref{thm:GLASD_glob_conv}). This result relies on a forced exploration mechanism, whose acceptance probability follows a logarithmic cooling schedule inspired by classical simulated annealing theory. We then turn our attention to the deterministic variant, Adaptive Stochastic Descent (ASD), which omits exploration and relies solely on monotonic improvement. While ASD lacks global convergence guarantees in non-convex settings, it admits a sharper characterization under convexity: we derive a linear convergence rate under strong convexity (Theorem \ref{thm:ASD_conv_rate}) and almost sure convergence under general convexity (Theorem \ref{thm:ASD_as_conv}). Together, these results provide a comprehensive theoretical foundation for understanding the convergence behavior of both GLASD and its exploration-free counterpart.

\begin{theorem}[Global convergence of GLASD]
Let $f: D \to \mathbb{R}$ be a continuous function defined on a compact hyperrectangle domain $D = \prod_{i=1}^n [a_i, b_i] \subset \mathbb{R}^n$. Suppose the sequence $\{x_t\}$ is generated by the GLASD algorithm, with the following properties:

(i) With probability $1 - 1/m$, a direction $j \in \{1, \dots, 2n\}$ is sampled from a fixed probability distribution $p$ with strictly positive entries, and a step $\delta$ is constructed along the associated coordinate-direction with step size $s_j$, clipped to ensure feasibility, i.e., $x_t + \delta \in D$;

(ii) With probability $1/m$, a forced exploration step is performed, in which a coordinate $i \in \{1, \dots, n\}$, a sign $\in \{+1, -1\}$, and a magnitude $s \sim \mathcal{U}(0, r)$ are drawn independently, and the resulting step is also clipped to satisfy $x_t + \delta \in D$;

(iii) Any forced exploration move that does not decrease the objective is accepted with probability
\[
q_t \geq \frac{m c}{\log(1 + t)} \quad \text{for some constant } c > 0.
\]

Then the sequence $\{x_t\}$ converges almost surely to the set of global minimizers of $f$; that is,
\[
\lim_{t \to \infty} \mathbb{P}(x_t \in \mathcal{X}^*) = 1, \quad \text{where } \mathcal{X}^* = \arg\min_{x \in D} f(x).
\]
\label{thm:GLASD_glob_conv}
\end{theorem}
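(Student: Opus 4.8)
The plan is to model $\{x_t\}$ as a time-inhomogeneous Markov chain on the compact domain $D$ and to combine a uniform reachability estimate for the forced-exploration kernel with a conditional Borel--Cantelli argument, in the spirit of the classical simulated-annealing convergence theory of Hajek. Throughout, fix $\eta>0$ and write $f^*=\min_{x\in D}f$ and $U_\eta=\{x\in D: f(x)<f^*+\eta\}$; since $f$ is continuous and $D$ is compact, $f^*$ is attained, $\mathcal{X}^*=\{x: f(x)=f^*\}$ is nonempty and closed, and $U_\eta$ is a nonempty open neighborhood of $\mathcal{X}^*$ shrinking to $\mathcal{X}^*$ as $\eta\downarrow0$. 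The key structural observation is monotonicity: both greedy-descent moves and accepted downhill exploration moves can only decrease $f$, so the only way the iterate can raise its objective (and thus leave a sublevel set $U_\eta$) is through an \emph{accepted uphill exploration} move, which occurs with per-step probability at most $\tfrac1m q_t\le c/\log(1+t)\to0$. Consequently the running minimum $f^{\mathrm{best}}_t$ is nonincreasing and bounded below by $f^*$, hence convergent; the crux is to show its limit equals $f^*$, which follows once we prove that $x_t\in U_\eta$ infinitely often almost surely for every $\eta>0$.

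First I would establish a uniform reachability lemma: there exist an integer $L$ and a constant $\rho>0$, depending on $D$, $r$, $n$ and $\eta$ but not on the starting point or time, such that from any $x\in D$ the $L$-step transition kernel initiated at time $t$ satisfies
\[
\mathbb{P}\bigl(x_{t+L}\in U_\eta \mid x_t=x\bigr)\;\ge\;\rho\prod_{k=0}^{L-1} q_{t+k}\;\ge\;\frac{\rho\,(mc)^{L}}{\bigl(\log(1+t+L)\bigr)^{L}}\;=:\;\gamma_t.
\]
To obtain this I would steer the chain to a fixed target $y\in\mathcal{X}^*$ one coordinate at a time using only forced-exploration steps: each such step is selected with probability $\tfrac1m$, the correct coordinate and sign with probability $\tfrac1{2n}$, and because the magnitude $s\sim\mathcal{U}(0,r)$ has density $1/r$, the clipped displacement lands in any prescribed subinterval with a positive probability bounded below uniformly. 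Since each coordinate can be traversed in a bounded number of steps (each interior step moves up to $r$), a uniform $L$ suffices, and continuity of $f$ guarantees that landing within a small ball of $y$ places the iterate in $U_\eta$. Every move along this path is accepted with probability at least $q_{t+k}$ (downhill moves with probability one, uphill moves with probability $q_{t+k}$), which yields the displayed product and hence $\gamma_t$.

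Next I would exploit the fact that the logarithmic schedule decays slowly enough to sum to infinity: since $(\log t)^{L}=o(t)$, one has $\gamma_t\sim C/(\log(1+t))^{L}$ with $\sum_t\gamma_t=\infty$. Partitioning time into consecutive blocks of length $L$ and letting $A_k=\{x_t\in U_\eta \text{ for some }t\in[kL,(k+1)L)\}$, the reachability lemma gives $\mathbb{P}(A_k\mid\mathcal{F}_{kL})\ge\gamma_{kL}$, and $\sum_k\gamma_{kL}=\infty$. The conditional (second) Borel--Cantelli lemma (L\'evy's extension) then forces $A_k$ to occur infinitely often almost surely, i.e.\ $x_t\in U_\eta$ infinitely often for every $\eta>0$, whence $f^{\mathrm{best}}_t\downarrow f^*$ and the returned iterate $x^{\mathrm{best}}_t$ converges almost surely to $\mathcal{X}^*$. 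For the in-probability statement $\mathbb{P}(x_t\in U_\eta)\to1$ about the current iterate, I would combine the frequent re-entry into $U_{\eta/2}$ with the vanishing per-step escape probability $\tfrac1m q_t\to0$: the expected number of steps between a visit to $U_{\eta/2}$ and the first subsequent exit from $U_\eta$ grows without bound, so the asymptotic fraction of time--and hence the probability at a fixed large $t$--spent outside $U_\eta$ tends to zero.

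The main obstacle is the continuous state space: unlike the finite-state simulated-annealing setting, the reachability estimate must be made uniform over all $x\in D$ and all $t$ while handling the clipping operation and the absolutely continuous proposal densities measurably, and one must verify that the positive-measure set of magnitude sequences steering $x$ into $U_\eta$ admits a lower bound independent of the current location. A secondary subtlety is that the acceptance rule is not of Metropolis form--uphill moves are accepted with a magnitude-independent probability $q_t$--so the standard Gibbs stationary-distribution machinery is unavailable; the argument must instead route entirely through the monotone running minimum together with the infinitely-often visits, and the promotion from almost-sure convergence of $f^{\mathrm{best}}_t$ to the convergence-in-probability statement for $x_t$ requires the separate escape-time estimate sketched above.
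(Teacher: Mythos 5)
Your route is genuinely different from the paper's: the paper verifies $\psi$-irreducibility of the exploration kernel, notes the logarithmic schedule, and then concludes by citing Hajek's simulated-annealing theorem, whereas you build a self-contained argument (uniform reachability lemma, conditional Borel--Cantelli, monotone running minimum). The first two-thirds of your plan is sound, and in places more careful than the paper's own proof, since Hajek's theorem is stated for finite state spaces and Metropolis-type acceptance probabilities $e^{-\Delta/T_t}$, neither of which applies verbatim here. Your reachability estimate (modulo the boundary clipping, which caps a step at half the distance to the wall, so reaching a target near the boundary needs a logarithmic number of halving steps rather than ``each interior step moves up to $r$''---this still yields a uniform $L$), the divergence of $\sum_k \gamma_{kL}$, and L\'evy's extension of Borel--Cantelli do give that $x_t \in U_\eta$ infinitely often almost surely for every $\eta>0$, hence $f^{\mathrm{best}}_t \downarrow f^*$ a.s.\ and the \emph{returned} iterate $x^{\mathrm{best}}_t$ converges to $\mathcal{X}^*$.

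The genuine gap is in your final step, the promotion to $\mathbb{P}(x_t \in U_\eta)\to 1$ for the \emph{current} iterate, and it is not a technicality. Your own estimates give an escape rate of at most $c/\log(1+t)$ per step (note this uses the upper bound $q_t \le mc/\log(1+t)$ from the algorithm's definition, not hypothesis (iii), which is only a lower bound) but a re-entry rate of only $\gamma_t \asymp (\log t)^{-L}$ per block of $L$ steps. For $L\ge 2$ the renewal--reward balance therefore goes the \emph{wrong} way: the expected sojourn inside $U_\eta$ is of order $\log t$ while the expected return time after an escape is of order $(\log t)^{L}$, so these bounds are consistent with the fraction of time spent outside $U_\eta$ tending to one, not zero. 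The statement ``the expected time to the first exit grows without bound'' is true but insufficient; you need returns to be much faster than escapes, and that fails in general. The deeper obstruction is that GLASD accepts an uphill move with probability $q_t$ \emph{independently of the size of the increase}, so the barrier-depth asymmetry that powers Hajek-type results (escaping the global basin costs $e^{-d_A/T_t}$, escaping a shallower local basin only $e^{-d_B/T_t}$ with $d_B<d_A$) is simply absent: escaping any basin costs one factor of $q_t$ per uphill move, however high the barrier. Concretely, in $n=2$ take a global basin $A$ and a strictly worse local basin $B$ separated by a ridge so that every axis-aligned path between them passes through points higher than both bottoms; then $A\to B$ and $B\to A$ each require exactly one accepted uphill move, both occur at rate $\asymp q_t$, and since $\sum_t q_t=\infty$ the occupation probabilities equilibrate to a constant ratio determined by proposal prefactors, so $\mathbb{P}(x_t\in U_\eta)$ stays bounded away from $1$. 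No refinement of your escape/return bookkeeping can get around this; what your argument rigorously delivers is almost-sure convergence of the running best value (the quantity GLASD actually returns), while the theorem's in-probability claim about $x_t$ would require either a Metropolis-type, magnitude-dependent acceptance rule or additional structural assumptions on $f$---a point the paper's appeal to Hajek glosses over as well.
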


\begin{proof}
Since the domain $D = \prod_{i=1}^n [a_i, b_i]$ is compact and the objective function $f$ is continuous, the minimum value $f^* = \min_{x \in D} f(x)$ is attained at some point $x^* \in D$, and the set of global minimizers $\mathcal{X}^* = \arg\min_{x \in D} f(x)$ is nonempty and closed.

The GLASD algorithm defines a time-inhomogeneous Markov process over $D$ by generating proposals along coordinate directions and accepting them according to a mixture of greedy and exploratory rules. At each iteration $t$, with probability $1/m$, the algorithm enters a forced exploration mode, where it selects a coordinate uniformly at random, chooses a direction uniformly from $\{+1, -1\}$, and samples a step size from a uniform distribution on $[0, r]$. The proposed step is then clipped to ensure that the updated iterate $x_t + \delta$ remains within $D$. Because the domain is a hyperrectangle and proposals affect only one coordinate at a time, this construction guarantees that the proposal kernel has support covering every axis-aligned neighborhood within $D$.

Hence, for any open neighborhood $U \subset D$ containing a global minimizer, there exists a nonzero probability of generating a proposal within $U$ during a forced exploration step. This ensures that the Markov process induced by GLASD is $\psi$-irreducible with respect to the Lebesgue measure on $D$.

To ensure convergence to a global minimizer, the algorithm must allow sufficiently frequent acceptance of exploratory steps. The acceptance probability for non-improving forced exploration moves is defined to satisfy $q_t \geq mc/\log(1 + t)$ for some $c > 0$, where $q_t$ decays sub-logarithmically in $t$. Consequently, the overall probability of making an uphill move at time $t$ (i.e., iteration) is at least $c/\log(1 + t)$. This cooling schedule aligns with the theoretical framework of simulated annealing, wherein logarithmic schedules are known to be both necessary and sufficient for convergence to global minima \cite{geman1984stochastic,hajek1988cooling}.

Specifically, \cite{hajek1988cooling} establishes that under mild regularity conditions, a time-inhomogeneous Markov chain with irreducible proposals and logarithmic acceptance probabilities converges almost surely to the set of global minimizers. Applying this result to GLASD, it follows that
\[
\lim_{t \to \infty} \mathbb{P}(x_t \in \mathcal{X}^*) = 1,
\]
which completes the proof.
\end{proof}


\begin{theorem}[Linear Convergence Rate of ASD]
Let $f : \mathbb{R}^n \to \mathbb{R}$ be a $\mu$-strongly convex and $L$-smooth function. Suppose the Adaptive Stochastic Descent (ASD) algorithm is run with the following conditions:
\begin{itemize}
    \item[(a)] Each iteration selects a direction $j \in \{1, \dots, 2n\}$ with probability $p_j \geq \delta > 0$;
    \item[(b)] Each direction $j$ has an associated step size $s_j \geq \underline{s} > 0$;
    \item[(c)] Steps are accepted only if they strictly reduce the function value;
    \item[(d)] Exact gradients are available.
\end{itemize}
Then, after $T$ successful (i.e., accepted) steps, the expected function value satisfies
\[
\mathbb{E}[f(x_T) - f^*] \leq \left(1 - \frac{\mu \cdot \underline{s}^2 \cdot \delta}{L} \right)^T (f(x_0) - f^*),
\]
where $f^* = \min_{x \in \mathbb{R}^n} f(x)$. That is, ASD achieves a linear convergence rate in expectation.
\label{thm:ASD_conv_rate}
\end{theorem}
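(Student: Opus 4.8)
The plan is to bound the expected one-step decrease of the suboptimality gap $f(x_t) - f^*$ by a fixed multiplicative factor and then iterate that contraction over the $T$ accepted steps. Two classical tools drive the argument. First, $L$-smoothness gives the descent lemma: for a coordinate step $\delta = \sigma s_j e_i$ along coordinate $i = \lceil j/2 \rceil$ with sign $\sigma \in \{+1,-1\}$,
\[
f(x_t + \delta) \le f(x_t) + \sigma s_j\, \partial_i f(x_t) + \tfrac{L}{2} s_j^2 .
\]
Second, $\mu$-strong convexity implies the Polyak--\L{}ojasiewicz (gradient-domination) inequality $\tfrac{1}{2}\|\nabla f(x)\|^2 \ge \mu\,(f(x) - f^*)$, and also guarantees that $f^* = \min_x f$ is attained at a unique minimizer. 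The entire proof is the interplay of these two inequalities with the probabilistic choice of direction.

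First I would analyze a single accepted step conditional on the current iterate $x_t$. Using the availability of exact gradients (condition (d)), the productive direction along coordinate $i$ is the one with $\sigma = -\,\mathrm{sign}(\partial_i f(x_t))$, so the first-order term in the descent lemma becomes $-s_j|\partial_i f(x_t)|$ and the guaranteed decrease is at least $s_j|\partial_i f(x_t)| - \tfrac{L}{2}s_j^2$. I would then take expectation over the random direction: since each descent direction is chosen with probability $p_j \ge \delta$ and carries a step size $s_j \ge \underline{s}$, aggregating the per-coordinate decreases yields a lower bound on the expected decrease that is proportional to $\delta\,\underline{s}^2$ and to a gradient-dependent quantity. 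Because a rejected proposal leaves both the iterate and the gap unchanged (such directions contribute a nonnegative, i.e. zero, change), the per-iteration expectation still satisfies a contraction, and the reindexing to ``accepted steps'' is legitimate via the tower property conditional on the history.

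Next I would invoke the PL inequality to replace the gradient-dependent quantity by $(f(x_t) - f^*)$, turning the additive decrease bound into the multiplicative contraction
\[
\mathbb{E}\!\left[f(x_{t+1}) - f^* \,\middle|\, x_t\right] \le \Big(1 - \tfrac{\mu\,\underline{s}^2\,\delta}{L}\Big)\,(f(x_t) - f^*).
\]
Taking total expectations and composing this inequality across the $T$ accepted steps yields the stated geometric rate $\big(1 - \mu\underline{s}^2\delta/L\big)^T (f(x_0) - f^*)$. A final sanity check is that the contraction factor lies in $[0,1)$, which holds under the implicit normalization keeping $\mu\underline{s}^2\delta/L \le 1$.

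The main obstacle is the per-step decrease bound for a fixed-size coordinate step. Unlike the idealized step $s = |\partial_i f|/L$, which produces the textbook decrease $|\partial_i f|^2/(2L)$ and the clean rate $1 - \mu\delta/L$, here the step size is only lower-bounded by $\underline{s}$ and may overshoot, so the quadratic penalty $\tfrac{L}{2}s_j^2$ must be controlled carefully; this is exactly where the factor $\underline{s}^2/L$ rather than $1/L$ enters, and where the acceptance rule (which forbids nondecreasing moves) must be used to discard overshooting proposals. Making this lower bound rigorous, and ensuring the resulting constant matches $\mu\underline{s}^2\delta/L$ precisely, is the crux; the subsequent telescoping over accepted steps and the PL-based conversion are then routine.
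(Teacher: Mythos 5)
Your plan mirrors the paper's proof almost step for step: a coordinate-wise descent lemma, an expectation over directions using $p_j \ge \delta$, the strong-convexity inequality $\|\nabla f(x)\|^2 \ge 2\mu\,(f(x)-f^*)$, and telescoping over accepted steps. The problem is that the step you yourself call ``the crux'' is exactly where the argument breaks, and your sketch does not close it. From the descent lemma with a \emph{fixed} step $s_j$, the guaranteed decrease along the productive sign is $s_j|\partial_i f(x_t)| - \tfrac{L}{2}s_j^2$, which is \emph{linear} in the gradient (and possibly negative), whereas the PL conversion requires a decrease bounded below by a constant times $\|\nabla f(x_t)\|^2$, i.e.\ \emph{quadratic} in the gradient. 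The acceptance rule cannot bridge this gap: acceptance only certifies a strictly positive decrease, which may be arbitrarily small, so overshooting proposals are not ``discarded'' --- they are accepted with tiny gains. Concretely, for $f(x)=\tfrac{L}{2}x^2$ in one dimension, $x_t = 1$, and a proposed step of length $s_j = 3/2$ toward the origin, the move is accepted with decrease $3L/8$, while the per-step bound needed to reach the stated rate, $\tfrac{s_j^2}{2L}\bigl(\nabla f(x_t)^\top v_j\bigr)^2 = 9L/8$, fails by a factor of three. No bookkeeping rescues this: far from the optimum a bounded step can reduce $f$ only by an amount linear in $\|\nabla f\|$, so a contraction factor of the form $1-\mu\underline{s}^2\delta/L$ (which, note, is not even dimensionless --- it scales like squared length under rescaling of $x$) cannot follow from conditions (a)--(d) alone.

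For what it is worth, the paper's own proof asserts precisely the inequality you left open, namely $f(x+s_jv_j) \le f(x) - \tfrac{s_j^2}{2L}(\nabla f(x)^\top v_j)^2$, as a consequence of smoothness plus acceptance, with no justification; so you have correctly located the weak point rather than missed a known fix. Two further gaps are shared by your sketch and the paper: directions whose proposals go uphill are rejected and contribute zero decrease, so the expectation over all $2n$ directions cannot aggregate quadratic gains from every $j$; and the reindexing from iterations to ``accepted steps'' under conditional expectations needs an explicit argument, not just an appeal to the tower property. A correct route to a linear rate would require step sizes adapted to the gradient (e.g.\ $s_j \asymp |\partial_i f(x_t)|/L$, which yields the dimensionless textbook rate $1-\mu\delta/L$) or an exact line search along the sampled coordinate; with only the lower bound $s_j \ge \underline{s}$ of condition (b), the theorem as stated is not recoverable by this method.
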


\begin{proof}
Since $f$ is $L$-smooth, for any unit vector $v$ and step size $\alpha > 0$, the descent lemma gives
\[
f(x + \alpha v) \leq f(x) + \alpha \nabla f(x)^T v + \frac{L}{2} \alpha^2.
\]
Now consider a direction $j \in \{1, \dots, 2n\}$ corresponding to unit vector $v_j$ and step size $s_j$. A proposed move $x' = x + s_j v_j$ is accepted only if it strictly decreases $f$, and smoothness ensures that such a step achieves at least
\[
f(x + s_j v_j) \leq f(x) - \frac{s_j^2}{2L} (\nabla f(x)^T v_j)^2.
\]

Taking expectation over the sampled direction $j$, and using $p_j \geq \delta$, we have
\begin{align*}
\mathbb{E}_{j}\left[ (\nabla f(x)^T v_j)^2 \right] &= \sum_{j=1}^{2n} p_j (\nabla f(x)^T v_j)^2\\
& \geq \delta \sum_{j=1}^{2n} (\nabla f(x)^T v_j)^2 = \delta \|\nabla f(x)\|^2.
\end{align*}
Thus,
\[
\mathbb{E}[f(x_{t+1}) \mid x_t] \leq f(x_t) - \frac{\underline{s}^2 \delta}{2L} \|\nabla f(x_t)\|^2.
\]

Now apply strong convexity of $f$:
\[
\|\nabla f(x_t)\|^2 \geq 2\mu (f(x_t) - f^*).
\]

Combining the two bounds gives
\[
\mathbb{E}[f(x_{t+1}) - f^*] \leq \left(1 - \frac{\mu \cdot \underline{s}^2 \cdot \delta}{L} \right) (f(x_t) - f^*).
\]

Unrolling this recurrence for $T$ accepted iterations yields
\[
\mathbb{E}[f(x_T) - f^*] \leq \left(1 - \frac{\mu \cdot \underline{s}^2 \cdot \delta}{L} \right)^T (f(x_0) - f^*).
\]
\end{proof}

\begin{theorem}[Almost Sure Convergence of ASD]
Let $f : \mathbb{R}^n \to \mathbb{R}$ be a convex and continuously differentiable function. Suppose the Adaptive Stochastic Descent (ASD) algorithm is run with the following conditions:
\begin{itemize}
    \item[(a)] At each iteration $t$, a direction $j \in \{1, \dots, 2n\}$ is selected independently with probability $p_j \geq \delta > 0$;
    \item[(b)] Each direction $j$ has an associated step size $s_j^{(t)} > 0$, and for all $t$, $\underline{s} \leq s_j^{(t)} \leq \overline{s}$;
    \item[(c)] A proposed step $x_{t+1} = x_t + s_j^{(t)} v_j$ is accepted only if $f(x_{t+1}) < f(x_t)$;
    \item[(d)] The function $f$ is bounded below.
\end{itemize}
Then the sequence $\{f(x_t)\}_{t \geq 0}$ converges almost surely, and every limit point $x_\infty$ of the iterate sequence satisfies:
\[
f(x_\infty) = \inf_{t \geq 0} f(x_t).
\]
Moreover, if the set of minimizers of $f$ is nonempty, then:
\[
\lim_{t \to \infty} \text{dist}(x_t, \arg\min f) = 0 \quad \text{almost surely}.
\]
\label{thm:ASD_as_conv}
\end{theorem}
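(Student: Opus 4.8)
The plan is to treat the three assertions in increasing order of difficulty, investing essentially all the effort into showing that the limiting objective value equals the global minimum.

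First I would record that $\{f(x_t)\}$ is monotone. By the acceptance rule (c) a proposed move is taken only when it strictly lowers $f$, and on a rejection $x_{t+1}=x_t$; hence $f(x_{t+1})\le f(x_t)$ along every sample path. Since $f$ is bounded below by assumption (d), the monotone convergence theorem gives convergence of $f(x_t)$ path-by-path, hence almost surely, to $f_\infty:=\inf_{t\ge0}f(x_t)$, which proves the first claim. This also confines every iterate to the sublevel set $S_0:=\{x:f(x)\le f(x_0)\}$, supplying the limit points used below. The second claim is then immediate from continuity: if $x_{t_k}\to x_\infty$ then $f(x_{t_k})\to f(x_\infty)$, while $f(x_{t_k})\to f_\infty$ because the whole sequence converges, so $f(x_\infty)=f_\infty=\inf_{t\ge0}f(x_t)$.

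The substance is the ``moreover'' statement, which reduces to proving $f_\infty=f^*:=\min f$. I would argue by contradiction: assume $f_\infty>f^*$ and let $x_\infty$ be a limit point, so that $\nabla f(x_\infty)\neq0$, since for a convex differentiable $f$ a vanishing gradient characterizes global minimizers. Then some coordinate $i$ has $\partial_i f(x_\infty)\neq0$, and the signed axis direction $v_{j^*}=-\operatorname{sign}(\partial_i f(x_\infty))\,e_i$ satisfies $\nabla f(x_\infty)^\top v_{j^*}=-|\partial_i f(x_\infty)|=:-2\eta<0$. By continuity of $\nabla f$ this directional derivative stays below $-\eta$ on a neighborhood $U$ of $x_\infty$, so by the fundamental theorem of calculus any step $x\mapsto x+sv_{j^*}$ whose entire segment lies in $U$ decreases $f$ by at least $\eta s$. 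Since $j^*$ is drawn with probability at least $\delta$ at every iteration and $x_{t_k}\to x_\infty$, a Borel--Cantelli argument shows direction $j^*$ is proposed infinitely often while the iterate sits in $U$.

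The hard part will be converting these proposals into accepted, genuinely productive moves in the presence of the \emph{non-vanishing} step-size lower bound in assumption (b). A forced step of size at least $\underline{s}$ can overshoot the nearby lower values and \emph{raise} $f$, whence it is rejected and no progress is logged -- the exact stalling seen for $f(x)=x^2$ started inside $(-\underline{s},\underline{s})$. To resolve this I would exploit the adaptive mechanism of ASD, under which repeated rejection along $j^*$ multiplicatively shrinks $s_{j^*}^{(t)}$ until the step lies entirely in $U$ and the guaranteed decrease $\eta\,s_{j^*}^{(t)}$ is realized and accepted; one must then argue that these accepted decreases do not become summable too quickly, so that infinitely many of them contradict $f(x_t)\to f_\infty>-\infty$. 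This balance between the lower bound $\underline{s}$, which forbids overshoot-driven progress, and the shrinkage needed to avoid overshoot is the crux of the proof and is where the step-size schedule must be controlled most carefully. Once $f_\infty=f^*$ is established, the distance statement follows from convexity: for $f$ with nonempty minimizer set and bounded sublevel sets, $\{f\le f^*+\varepsilon\}$ contracts onto $\arg\min f$ as $\varepsilon\downarrow0$, so $f(x_t)\downarrow f^*$ forces $\operatorname{dist}(x_t,\arg\min f)\to0$ almost surely.
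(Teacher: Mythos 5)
Your handling of the first two assertions is correct and essentially identical to the paper's: monotonicity from the acceptance rule plus boundedness below gives convergence of $f(x_t)$ to $f_\infty=\inf_t f(x_t)$, and continuity transfers that value to every limit point (your direct argument is cleaner than the paper's proof-by-contradiction, but it is the same idea). The genuine gap is in the ``moreover'' part, which you explicitly leave open: you announce the crux --- converting infinitely many proposals of a descent direction into \emph{accepted} moves despite the step-size floor --- and then defer it to ``careful control of the step-size schedule'' without supplying the argument. Moreover, the repair you sketch cannot work: you propose to let repeated rejections multiplicatively shrink $s_{j^*}^{(t)}$ until the step fits inside the neighborhood $U$, but hypothesis (b) of the theorem asserts $\underline{s}\le s_j^{(t)}$ for all $t$, so the shrinkage you need is ruled out by the very assumptions you are working under. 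In fact your own stalling example settles the matter: for $f(x)=x^2$ in dimension one, started at $x_0$ with $0<|x_0|<\underline{s}/2$, every proposed step has size at least $\underline{s}>2|x_0|$, hence lands at a point of strictly larger $f$-value, is rejected, and the iterate never moves; then $\mathrm{dist}(x_t,\arg\min f)=|x_0|>0$ forever. So the ``moreover'' conclusion is \emph{false} under hypothesis (b) as stated --- what you found is a defect of the theorem, not merely a hard step you failed to finish.

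It is worth knowing that the paper's own proof does not close this gap either; it papers over it. After establishing that every limit point $x_\infty$ satisfies $f(x_\infty)=\inf_t f(x_t)$, the paper asserts ``from earlier we have $f(x_\infty)=\inf_t f(x_t)\to f^*$'' --- but nothing earlier shows that the monotone limit equals the global minimum $f^*$; that identity is exactly the point you isolated as the crux, and it is exactly what fails in the counterexample above. (The paper's final step, passing from ``all limit points lie in $\mathcal{X}^*$'' to $\mathrm{dist}(x_t,\mathcal{X}^*)\to 0$ via ``bounded-size moves,'' is similarly unsubstantiated.) So your proposal is incomplete, but it is incomplete at the one place where the published argument is also invalid. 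The honest resolution is to amend the hypotheses --- e.g., allow $s_j^{(t)}\to 0$ under repeated rejection while ensuring accepted decreases are not summably small, or impose a sufficient-decrease condition tied to $\|\nabla f(x_t)\|$ --- rather than to search for a cleverer argument under (b).
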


\begin{proof}
Let $\{x_t\}_{t \geq 0}$ denote the sequence of iterates generated by ASD. Because steps are accepted only if they strictly reduce the function value, the sequence $\{f(x_t)\}$ is strictly decreasing unless the algorithm rejects all proposed steps at iteration $t$.

Moreover, since $f$ is bounded below, the sequence $\{f(x_t)\}$ is bounded and monotonic, hence convergent:
\[
f(x_t) \downarrow f_\infty \quad \text{as } t \to \infty.
\]

Let $x_\infty$ denote a limit point of the sequence $\{x_t\}$ (guaranteed to exist since the step sizes are bounded and hence the sequence is tight). We want to show that $x_\infty$ achieves the minimal value among all iterates. Suppose, for contradiction, that there exists some $\epsilon > 0$ such that:
\[
f(x_\infty) > f_\infty + \epsilon.
\]

Since $f(x_t) \to f_\infty$, there exists a time $T$ such that for all $t \geq T$, we have:
\[
f(x_t) < f(x_\infty) - \epsilon/2.
\]

But by continuity of $f$ and the fact that $x_t \to x_\infty$ along some subsequence, we must also have:
\[
\lim_{t \to \infty} f(x_t) = f(x_\infty),
\]
contradicting the previous inequality. Hence, all limit points $x_\infty$ satisfy:
\[
f(x_\infty) = \inf_{t \geq 0} f(x_t).
\]

Next, suppose the set of minimizers $\mathcal{X}^* = \arg\min f$ is nonempty. Since $f$ is convex and continuous, this set is closed and convex. If $x_\infty \notin \mathcal{X}^*$, then $f(x_\infty) > f^*$, where $f^* = \min_x f(x)$. But from earlier we have $f(x_\infty) = \inf_t f(x_t) \to f^*$, so this contradicts the assumption that $x_\infty$ is not a minimizer.

Therefore, all limit points lie in $\mathcal{X}^*$, and because the sequence makes bounded-size moves, we conclude:
\[
\lim_{t \to \infty} \text{dist}(x_t, \mathcal{X}^*) = 0 \quad \text{almost surely}.
\]
\end{proof}

\begin{remark}
Theorem \ref{thm:ASD_conv_rate} and \ref{thm:ASD_as_conv} characterize the convergence behavior of the Adaptive Stochastic Descent (ASD) algorithm, which omits forced exploration. Unlike GLASD, ASD only accepts strictly improving updates, leading to a monotonic decrease in the objective value. This structure enables classical convergence analysis: a linear convergence rate under strong convexity, and almost sure convergence under general convexity. In contrast, GLASD permits occasional acceptance of non-improving exploratory steps, and thus its theoretical guarantees must rely on probabilistic convergence arguments (as in Theorem~\ref{thm:GLASD_glob_conv}) rather than deterministic descent.
\end{remark}

\section{GLASD over the Correlation Matrix Space}\label{sec:correlation_optimization}
\noindent This section outlines a transformation that maps the space of full-rank correlation matrices to a domain with simpler geometry, enabling convenient exploration of the search space within a Euclidean domain. A full-rank correlation matrix \( \boldsymbol{C} \in \mathbb{R}^{M \times M} \) (here $M$ denotes the dimension of the matrix, different from what $M$ denotes in the context of algorithm) is a symmetric positive definite matrix with unit diagonal entries. Such a matrix can be uniquely represented via its Cholesky decomposition as \( \boldsymbol{C} = \boldsymbol{L} \boldsymbol{L}^\top \), where \( \boldsymbol{L} \) is a lower triangular matrix with strictly positive diagonal entries. This decomposition provides a one-to-one correspondence between correlation matrices and a structured class of lower triangular matrices. The unit diagonal constraint of \( \boldsymbol{C} \) imposes implicit constraints on the elements of \( \boldsymbol{L} \), but the decomposition remains unique when the diagonal entries of \( \boldsymbol{L} \) are constrained to be positive. This parameterization is especially valuable in optimization, as it allows re-expressing the complex geometry of the correlation matrix space in terms of unconstrained or simply constrained elements of \( \boldsymbol{L} \), facilitating numerical search and efficient algorithmic design. To this end, Theorem~\ref{thm:bijection} constructs a bijective mapping from the space of full-rank correlation matrices to an open hyperrectangle in Euclidean space.

\begin{theorem}[Bijection Between Angular Coordinates and Correlation Matrices]
Let $\bm{L} = (l_{mg})$ be a lower triangular matrix such that $\bm{C} = \bm{L} \bm{L}^\top$ is a full-rank correlation matrix, i.e., a symmetric positive definite matrix with unit diagonal.

Define:
\begin{itemize}
    \item $l_{11} = 1$,
    \item for $m = 2$, the vector $\bm{l}_2 = (l_{21}, l_{22})$ lies on the unit semicircle $\mathbb{S}_+^1 = \{(x_1,x_2)\in\mathbb{R}^2: x_1^2 + x_2^2 = 1,\ x_2 > 0\}$, and is parameterized by a single angle $\omega_{21} \in (-\pi/2, \pi/2)$ via $l_{21} = \sin\omega_{21},\ l_{22} = \cos\omega_{21}$,
    \item for $m \geq 3$, the vector $\bm{l}_m = (l_{m1}, \dots, l_{mm})$ lies in $\mathbb{S}_+^{m-1} := \{x \in \mathbb{R}^m : \|x\| = 1,\ x_m > 0\}$ and is parameterized by angles $\Omega_m = (\omega_{m1}, \dots, \omega_{m,m-1}) \in \Theta_m := (0, \pi/2) \times (0, \pi)^{m-2} \times (0, 2\pi)$, via
\end{itemize}
\begin{align*}
&l_{m1} = \left( \prod_{i=1}^{m-2} \sin\omega_{mi} \right) \sin\omega_{m,m-1}, \\
&l_{m2} = \left( \prod_{i=1}^{m-2} \sin\omega_{mi} \right) \cos\omega_{m,m-1}, \\
&l_{m3} = \left( \prod_{i=1}^{m-3} \sin\omega_{mi} \right) \cos\omega_{m,m-2}, \\
&\;\;\;\;\;\;\;\;\vdots \\
&l_{m,m-1} = \sin\omega_{m1} \cos\omega_{m2}, \\
&l_{mm} = \cos\omega_{m1}.
\end{align*}

Then, the full collection of angular coordinates $\{\omega_{21}, \Omega_3, \ldots, \Omega_M\}$ uniquely parameterizes the Cholesky factor $\bm{L}$, and hence uniquely determines the correlation matrix $\bm{C} = \bm{L}\bm{L}^\top$. Conversely, any full-rank correlation matrix $\bm{C}$ can be mapped to a unique set of such angular coordinates.
\label{thm:bijection}
\end{theorem}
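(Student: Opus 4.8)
The plan is to split the claimed bijection into two independent pieces and then compose them: first, the classical correspondence between full-rank correlation matrices and their Cholesky factors, and second, a hyperspherical-coordinate bijection applied one row at a time. I would begin by invoking the standard fact that every symmetric positive definite matrix admits a \emph{unique} Cholesky factorization $\boldsymbol{C} = \boldsymbol{L}\boldsymbol{L}^\top$ with $\boldsymbol{L}$ lower triangular and strictly positive diagonal. The observation that translates the structural constraints is that the $(m,m)$ entry of $\boldsymbol{C}$ equals $\|\boldsymbol{l}_m\|^2$, where $\boldsymbol{l}_m = (l_{m1},\dots,l_{mm})$ is the $m$-th row of $\boldsymbol{L}$; hence the unit-diagonal constraint $\mathrm{diag}(\boldsymbol{C}) = \boldsymbol{1}$ is \emph{equivalent} to $\|\boldsymbol{l}_m\| = 1$ for every $m$, while positivity of the Cholesky diagonal forces $l_{mm} > 0$. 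Thus each row satisfies $\boldsymbol{l}_m \in \mathbb{S}_+^{m-1}$, and the first row is pinned to $l_{11} = 1$.

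The next step is to show that these per-row constraints \emph{decouple}. I would verify that for \emph{any} choice of unit vectors $\boldsymbol{l}_m \in \mathbb{S}_+^{m-1}$ ($m = 2,\dots,M$), the resulting lower-triangular matrix $\boldsymbol{L}$ has strictly positive diagonal, hence is invertible, so that $\boldsymbol{C} = \boldsymbol{L}\boldsymbol{L}^\top$ is automatically symmetric, positive definite, and unit-diagonal, i.e.\ a valid full-rank correlation matrix; the converse is exactly Cholesky uniqueness. This establishes a bijection between the correlation-matrix space and the product $\prod_{m=2}^{M} \mathbb{S}_+^{m-1}$, reducing the theorem to proving a bijection $\Theta_m \leftrightarrow \mathbb{S}_+^{m-1}$ for each fixed $m$ in isolation.

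The per-row bijection is where I expect the main work to lie, and I would argue it by induction on the sphere dimension. The forward direction is a direct computation: repeated use of $\sin^2 + \cos^2 = 1$ collapses $\sum_g l_{mg}^2$ to $1$, and the restriction of the leading angle to $(0,\pi/2)$ forces $l_{mm} = \cos\omega_{m1} > 0$, so the map lands in $\mathbb{S}_+^{m-1}$. For injectivity and surjectivity I would invert the formulas constructively: set $\omega_{m1} = \arccos(l_{mm})$, which is uniquely determined in $(0,\pi/2)$ since $l_{mm} \in (0,1)$; note $\sin\omega_{m1} = \sqrt{1-l_{mm}^2} > 0$; normalize the truncated vector $(l_{m1},\dots,l_{m,m-1})/\sin\omega_{m1}$ to a unit vector in one lower dimension; and recurse, recovering the terminal azimuthal angle from a two-argument arctangent of $(l_{m1},l_{m2})$. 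The delicate points are (a) checking that the sine factors appearing as denominators are strictly positive on the prescribed angular ranges so the recursion is well defined, (b) confirming that strict monotonicity of $\arccos$ on $(0,1)$ and of the azimuthal branch yields a \emph{unique} preimage, and (c) handling the coordinate singularities (poles and the seam), a measure-zero set on which the angular chart degenerates, either by restricting to the interior where the inversion is a genuine homeomorphism or by noting these configurations are immaterial for the intended optimization. The base case $m=2$ is precisely the semicircle map $\omega_{21}\mapsto(\sin\omega_{21},\cos\omega_{21})$, a bijection onto $\mathbb{S}_+^1$ because $\sin$ is strictly increasing on $(-\pi/2,\pi/2)$ with range $(-1,1)$.

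Finally I would assemble the pieces: composing the Cholesky correspondence with the product of the per-row hyperspherical bijections yields a bijection between $\prod_{m} \Theta_m$ and the full-rank correlation matrices. This delivers both directions of the statement at once, namely that the angular coordinates $\{\omega_{21},\Omega_3,\dots,\Omega_M\}$ uniquely determine $\boldsymbol{C}$, and that any full-rank $\boldsymbol{C}$ is mapped to a unique such collection of angles.
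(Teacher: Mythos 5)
Your proposal is correct and follows the same two-part skeleton as the paper's own proof: a forward direction showing that any admissible angle collection produces unit-norm rows with $l_{mm}>0$, hence a valid full-rank correlation matrix, and a backward direction resting on uniqueness of the Cholesky factorization followed by row-wise inversion of the spherical coordinates. The substantive difference is localized in the per-row step: the paper disposes of the bijection $\Theta_m \leftrightarrow \mathbb{S}_+^{m-1}$ for $m \geq 3$ by citing Blumenson's classical result on $n$-dimensional spherical coordinates, whereas you prove it from scratch by induction --- recovering $\omega_{m1} = \arccos(l_{mm})$ uniquely in $(0,\pi/2)$, checking the sine denominators stay strictly positive, normalizing the truncated row, and recursing down to a two-argument arctangent for the azimuthal angle. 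You also make explicit two points the paper leaves implicit: the decoupling lemma (that \emph{arbitrary} choices of unit rows in $\mathbb{S}_+^{m-1}$ always assemble into a valid correlation matrix, so the row constraints are independent), and the coordinate-singularity caveat --- with the open angular domains $\Theta_m$, the hyperspherical chart misses the poles and the azimuthal seam, a measure-zero subset of $\mathbb{S}_+^{m-1}$, so the claimed bijection strictly holds only off that set. The latter observation is a genuine refinement: it is a caveat on the theorem as stated, not just on your proof, and neither the paper's proof nor its citation to Blumenson addresses it. What each approach buys is the usual trade-off: the paper's argument is shorter and leans on established literature; yours is self-contained, makes the inversion algorithmically explicit (useful, since the software must actually compute these angles), and surfaces the boundary issue that a careful reader of the paper would otherwise have to track down in the cited reference.
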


\begin{proof}
We prove both directions of the bijection. \\
\emph{Part 1, Angular coordinates define a valid correlation matrix:} Let $\{\omega_{21}, \Omega_3, \ldots, \Omega_M\}$ be a valid set of angles in the prescribed domains. Define $\bm{L}$ as:
\begin{itemize}
    \item $l_{11} = 1$,
    \item $l_{21} = \sin\omega_{21}$ and $l_{22} = \cos\omega_{21}$,
    \item for $m \geq 3$, construct $\bm{l}_m$ using the recursive formula above.
\end{itemize}
Each such $\bm{l}_m$ lies on the unit sphere with positive last coordinate, i.e., $\|\bm{l}_m\| = 1$ and $l_{mm} > 0$. Thus,
\begin{itemize}
    \item $\bm{L}$ has unit-norm rows,
    \item $\bm{C} = \bm{L} \bm{L}^\top$ has unit diagonal: $C_{ii} = \bm{l}_i^\top \bm{l}_i = 1$,
    \item $\bm{C} \succ 0$ because $\bm{L}$ has full rank with positive diagonals.
\end{itemize}
Therefore, the constructed $\bm{C}$ is a valid correlation matrix.

\noindent \emph{Part 2, Any correlation matrix corresponds to a unique angular set:} Let $\bm{C}$ be any full-rank correlation matrix. Then there exists a unique lower triangular matrix $\bm{L}$ with positive diagonal entries such that $\bm{C} = \bm{L} \bm{L}^\top$ (Cholesky decomposition).

\begin{itemize}
    \item Since $C_{11} = 1$, we must have $l_{11} = 1$,
    \item For $m=2$, define $\omega_{21} = \arctan(l_{21}/l_{22}) \in (-\pi/2, \pi/2)$,
    \item For $m \geq 3$, since $\bm{l}_m \in \mathbb{S}_+^{m-1}$, and the mapping $\xi_m : \Theta_m \to \mathbb{S}_+^{m-1}$ is bijective \cite{blumenson1960}, there exists a unique $\Omega_m$ such that $\bm{l}_m = \xi_m(\Omega_m)$.
\end{itemize}
Thus, we can uniquely recover the angular coordinates from any given correlation matrix.

\noindent \emph{Conclusion:} This establishes a one-to-one correspondence between the angular coordinate set $\{\omega_{21}, \Omega_3, \ldots, \Omega_M\}$ and the space of correlation matrices via their Cholesky factors.
\end{proof}

\noindent \textbf{Tuning parameters:} Such a transformation enables the direct application of the GLASD algorithm---developed on hyperrectangular domains---to the space of correlation matrices. The default tuning parameter values are set as follows: initial step sizes \( s_j = 0.1 \); initial directional probabilities \( p_j = \frac{1}{2n} \), where \( n \) denotes the dimensionality of the transformed hyperrectangle; \( s_{\text{inc}} = 2 \), \( s_{\text{dec}} = 2 \), \( p_{\text{inc}} = 2 \), \( p_{\text{dec}} = 2 \); \( m = 5 \); \( c = 0.001 \log(n) \); maximum iterations \( T \approx 3000 \log(n) \); stagnation window \( M = 4n \); and convergence tolerance \( \epsilon = 10^{-20} \). Finally, the exploratory step radius \( r \) is dynamically set to the absolute difference between the current coordinate value and the corresponding upper (or lower) bound, depending on the sign of the proposed movement.

\section{Benchmark Study}\label{sec:GLASD_benchmark}
\noindent To evaluate the performance of our proposed optimizer on structured, high-dimensional constrained domains, we adapt four classical benchmark functions---Ackley, Griewank, Rastrigin, and Rosenbrock---to operate over the space of correlation matrices. Specifically, given a correlation matrix $\boldsymbol{C} \in \mathbb{R}^{M \times M}$, we extract and vectorize its off-diagonal elements into a vector $\boldsymbol{x} \in \mathbb{R}^{d}$, where $d = M(M-1)$. Each function is applied to $\boldsymbol{x}$ after appropriate scaling, in line with standard benchmarking conventions. The modified Ackley function is evaluated on inputs $x_i = 10 \cdot C_{pq}$, $p\neq q \in \{1,\ldots, M\}$, capturing flat regions and oscillatory behavior through exponential and cosine terms. The Griewank function is applied to inputs scaled as $x_i = 100 \cdot C_{pq}$, combining smooth quadratic growth with a product of cosine modulations, challenging optimizers with moderate multimodality. The Rastrigin function, evaluated on $x_i = 10 \cdot C_{pq}$, is known for its highly repetitive landscape with many regularly spaced local minima, testing the algorithm’s robustness in oscillatory terrains. Lastly, the Rosenbrock function, applied to $x_i = 100 \cdot C_{pq}$, introduces a narrow, curved valley leading to the global minimum, requiring careful navigation through non-linear dependencies. These transformations preserve the structural complexity of the original functions while embedding them in the space of positive definite correlation matrices with unit diagonals, making them suitable for stress-testing black-box optimizers under non-Euclidean and constraint-aware settings. Additionally, note that under this reformulation of the objective functions on the space of correlation matrices, the global minimum is attained at the identity matrix. We refer to \cite{surjanovic2013virtual} for the original definitions and considered domains of these benchmark functions.

In Table~\ref{on_PD_benchmark}, we compare the performance of GLASD against a range of positive definite manifold optimization methods available in MATLAB. Specifically, we include classical constrained optimizers such as Interior-Point, Sequential Quadratic Programming (SQP), and Active-Set algorithms implemented in \texttt{fmincon}, as well as manifold-based solvers from the \texttt{Manopt} toolbox~\cite{boumal2014manopt}. The maximum computation time for each method was capped at one hour. For each setting, optimization was repeated 10 times using the same set of randomly chosen initial points across all methods. The table reports the minimum objective value obtained across the 10 runs, the corresponding standard error, and the mean runtime (in seconds, with standard error in parentheses) for correlation matrices of dimensions $M \times M$ with $M = 5, 10, 20, 50$. All computations are performed on a desktop running the Windows 10 Enterprise operating system desktop with 32 GB RAM and the following processor characteristics:  12th Gen Intel(R) Core(TM) i7-12700, 2100 Mhz, 12 Cores(s), 20 Logical Processor(s).

Across all four benchmark functions and increasing matrix dimensions, GLASD consistently delivers competitive or superior performance relative to both \texttt{fmincon} and the \texttt{Manopt} solvers. Notably, for the \textit{Ackley} and \textit{Rastrigin} functions—known for their highly multimodal and oscillatory landscapes—GLASD achieves lower objective values while maintaining computational efficiency. In contrast, traditional solvers such as \texttt{fmincon} frequently converge to suboptimal regions or experience numerical instability, particularly in higher dimensions. On the \textit{Griewank} function, which combines smooth quadratic and periodic components, GLASD exhibits favorable stability, achieving near-optimal solutions with minimal variation, particularly for larger dimensions where classical methods struggle with the cosine product term. For the \textit{Rosenbrock} function, which presents a narrow curved valley around the optimum, GLASD matches or surpasses manifold-based methods in lower to moderate dimensions while maintaining significantly reduced computation time.

Overall, the results validate GLASD's adaptability and robustness across diverse function classes and dimensional scales, without relying on gradient information or function-specific tuning. Its ability to explore highly constrained, non-Euclidean domains such as the correlation matrix space makes it a promising tool for black-box optimization in structured statistical inference tasks.

\begin{table*}[htbp]
\caption{Performance comparison of GLASD and baseline optimization algorithms across four (modified) benchmark objective functions (Ackley, Griewank, Rastrigin, and Rosenbrock), evaluated over increasing correlation matrix dimensions ($M \times M$ for $M = 5$, 10, 20, 50). For each setting, results are based on 10 runs initialized from the same set of randomly chosen starting points across all methods. The reported minimum objective value is the best among the 10 runs, while the mean runtime (in seconds) and standard error (in parenthesis) are computed across those 10 runs.}
\label{on_PD_benchmark}
\centering
\resizebox{1.95\columnwidth}{!}{%
\begin{tabular}{l|l|ccc|ccc|ccc|ccc}
\hline
\multirow{2}{*}{Functions} & \multirow{2}{*}{Methods} & \multicolumn{3}{c|}{$M = 5$} & \multicolumn{3}{c|}{$M = 10$} & \multicolumn{3}{c|}{$M = 20$} & \multicolumn{3}{c}{$M = 50$} \\ \cline{3-14} 
 &  & min. value & s.e. of values & mean time (s.e.) & min. value & s.e. of values & mean time (s.e.) & min. value & s.e. of values & mean time (s.e.) & min. value & s.e. of values & mean time (s.e.) \\ \hline
\multirow{8}{*}{Ackley} & GLASD & \textbf{1.85E-01} & 2.93E-01 & 0.05 (0.016) & \textbf{1.25E-07} & 7.99E-01 & 0.25 (0.069) & \textbf{1.21E-05} & 3.26E-01 & 9.47 (3.598) & \textbf{1.15E+00} & 1.27E-01 & 24.00 (0.660) \\
 & fmincon:active-set & 4.77E+00 & 6.92E-01 & 0.10 (0.038) & 2.71E+00 & 3.26E-01 & 0.25 (0.038) & 2.24E+00 & 1.97E-01 & 1.23 (0.088) & 2.58E+00 & 8.08E-02 & 14.84 (1.938) \\
 & fmincon:interior-point & 7.62E+00 & 5.15E-01 & 0.11 (0.031) & 2.76E+00 & 2.95E-01 & 0.22 (0.033) & 3.65E+00 & 1.25E-01 & 0.30 (0.031) & 4.12E+00 & 1.75E-01 & 0.74 (0.023) \\
 & fmincon:sqp & 2.32E+00 & 6.65E-01 & 0.03 (0.005) & \textbf{1.57E-01} & 3.47E-01 & 0.21 (0.008) & 2.11E+00 & 1.88E-01 & 3.41 (0.194) & 2.75E+00 & 6.34E-02 & 36.28 (7.507) \\
 & Manopt:barzilai-borwein & \textbf{2.59E-04} & 1.56E+00 & 1.20 (0.160) & 2.19E+00 & 2.09E-01 & 12.90 (0.106) & 2.74E+00 & 1.91E-01 & 1845.61 (16.090) & 2.91E+00 & 4.25E-02 & 3655.08 (9.139) \\
 & Manopt:conjugate-gradient & 3.40E+00 & 9.46E-01 & 0.34 (0.141) & 5.33E+00 & 4.60E-01 & 10.06 (1.096) & 2.59E+00 & 3.14E-01 & 1473.70 (35.334) & 2.75E+00 & 2.94E-02 & 3628.97 (8.379) \\
 & Manopt:steepest-descent & 2.01E+00 & 1.11E+00 & 0.90 (0.188) & 2.25E+00 & 5.86E-01 & 12.29 (0.647) & \textbf{1.37E+00} & 9.70E-02 & 1562.16 (84.209) & \textbf{3.92E-02} & 1.94E-01 & 3652.13 (12.547) \\
 & Manopt:trust-region & 8.58E+00 & 5.89E-01 & 3.41 (0.093) & 5.27E+00 & 4.55E-01 & 125.19 (57.512) & 2.13E+00 & 2.51E-01 & 3612.68 (5.701) & 2.12E+00 & 7.16E-02 & 3739.28 (22.810) \\ \hline
\multirow{8}{*}{Griewank} & GLASD & 7.53E-05 & 3.51E-01 & 0.02 (0.002) & 1.32E-02 & 1.29E-01 & 0.18 (0.032) & 2.00E-05 & 5.02E-01 & 3.15 (0.486) & 6.89E+00 & 4.95E-01 & 24.19 (0.454) \\
 & fmincon:active-set & 3.54E-08 & 1.30E-01 & 0.06 (0.014) & 9.67E-01 & 1.50E-02 & 0.49 (0.048) & 1.12E+00 & 4.90E-02 & 8.87 (0.684) & 1.05E+00 & 8.26E-02 & 711.20 (49.887) \\
 & fmincon:interior-point & \textbf{3.62E-14} & 1.34E-01 & 0.09 (0.008) & 5.31E-09 & 1.14E-01 & 0.17 (0.003) & 8.66E-02 & 1.00E-01 & 0.26 (0.007) & 1.58E+01 & 6.60E-01 & 0.75 (0.012) \\
 & fmincon:sqp & 5.24E-12 & 1.35E-01 & 0.02 (0.001) & 2.06E-04 & 1.03E-01 & 0.16 (0.018) & 7.09E-03 & 9.69E-02 & 5.70 (2.323) & \textbf{8.09E-02} & 7.88E-02 & 523.22 (129.671) \\
 & Manopt:barzilai-borwein & 1.19E-01 & 1.22E-01 & 1.30 (0.155) & 1.22E-11 & 1.29E-01 & 12.23 (0.828) & 3.99E-10 & 9.02E-02 & 1645.40 (33.268) & 1.18E+00 & 1.63E-01 & 3651.76 (12.201) \\
 & Manopt:conjugate-gradient & 3.02E-13 & 1.09E-01 & 0.20 (0.079) & \textbf{1.51E-12} & 1.13E-01 & 7.40 (1.619) & \textbf{4.14E-12} & 7.48E-07 & 1006.67 (119.258) & \textbf{2.80E-02} & 9.04E-02 & 3648.72 (8.335) \\
 & Manopt:steepest-descent & 3.65E-13 & 1.48E-01 & 1.03 (0.205) & 3.20E-12 & 1.03E-01 & 11.63 (1.040) & 1.55E-04 & 7.65E-02 & 1412.82 (8.442) & 6.05E-01 & 5.84E-02 & 3666.81 (8.680) \\
 & Manopt:trust-region & \textbf{6.12E-14} & 1.40E-01 & 2.95 (0.459) & \textbf{8.57E-13} & 1.03E-01 & 88.85 (33.392) & \textbf{6.85E-11} & 5.54E-02 & 3424.90 (183.014) & 4.03E+00 & 3.86E-01 & 3977.01 (71.034) \\ \hline
\multirow{8}{*}{Rastrigin} & GLASD & \textbf{1.01E+01} & 2.11E+01 & 0.03 (0.011) & \textbf{1.13E+02} & 5.46E+01 & 0.36 (0.079) & \textbf{6.83E+02} & 7.48E+01 & 4.20 (0.609) & \textbf{3.40E+03} & 1.92E+02 & 22.22 (0.702) \\
 & fmincon:active-set & 5.97E+01 & 5.47E+01 & 0.12 (0.050) & 5.69E+02 & 4.79E+01 & 2.83 (0.112) & NaN & NaN & NaN (NaN) & NaN & NaN & NaN (NaN) \\
 & fmincon:interior-point & 1.53E+02 & 4.04E+01 & 0.15 (0.040) & 3.78E+02 & 6.08E+01 & 0.33 (0.043) & 2.11E+03 & 1.25E+02 & 0.37 (0.022) & 1.95E+04 & 7.45E+02 & 0.97 (0.045) \\
 & fmincon:sqp & 4.78E+01 & 2.55E+01 & 0.04 (0.008) & 3.86E+02 & 2.92E+01 & 0.83 (0.024) & NaN & NaN & NaN (NaN) & NaN & NaN & NaN (NaN) \\
 & Manopt:barzilai-borwein & \textbf{1.99E+00} & 1.54E+01 & 2.66 (0.106) & \textbf{2.39E+01} & 1.09E+01 & 19.19 (0.098) & \textbf{2.24E+02} & 2.26E+02 & 1232.87 (111.907) & 2.06E+04 & 8.76E+02 & 3637.29 (9.777) \\
 & Manopt:conjugate-gradient & 2.37E+02 & 5.28E+01 & 0.51 (0.193) & 7.32E+02 & 2.91E+01 & 15.83 (1.656) & 1.67E+03 & 3.57E+01 & 1320.30 (14.472) & \textbf{6.34E+03} & 9.65E+01 & 3648.68 (9.919) \\
 & Manopt:steepest-descent & 1.93E+02 & 4.66E+01 & 1.69 (0.308) & 6.27E+02 & 4.69E+01 & 19.08 (0.043) & 1.59E+03 & 4.92E+01 & 1341.30 (16.733) & 8.07E+03 & 2.55E+02 & 3641.47 (9.235) \\
 & Manopt:trust-region & 2.37E+02 & 5.38E+01 & 5.55 (0.155) & 7.32E+02 & 2.86E+01 & 129.41 (60.144) & 1.61E+03 & 2.70E+01 & 3615.28 (6.706) & 8.33E+03 & 1.93E+02 & 3881.41 (51.192) \\ \hline
\multirow{8}{*}{Rosenbrock} & GLASD & \textbf{5.15E-11} & 1.40E+08 & 0.05 (0.009) & \textbf{8.81E+01} & 2.50E-02 & 0.42 (0.052) & \textbf{3.76E+02} & 4.96E+07 & 3.89 (1.022) & 2.44E+08 & 1.60E+07 & 21.47 (0.249) \\
 & fmincon:active-set & 6.73E+03 & 2.45E+05 & 0.09 (0.008) & 1.17E+04 & 9.86E+03 & 2.01 (0.041) & 9.96E+04 & 2.02E+04 & 377.28 (5.830) & NaN & NaN & NaN (NaN) \\
 & fmincon:interior-point & 2.68E-08 & 1.88E+00 & 0.11 (0.020) & 7.63E+02 & 7.18E+04 & 0.14 (0.002) & 4.13E+08 & 2.30E+08 & 0.20 (0.004) & 1.01E+09 & 2.36E+09 & 0.76 (0.030) \\
 & fmincon:sqp & 1.52E+02 & 2.43E+06 & 0.03 (0.002) & \textbf{1.61E+02} & 6.12E+05 & 0.87 (0.176) & 2.03E+03 & 4.59E+05 & 202.14 (18.788) & NaN & NaN & NaN (NaN) \\
 & Manopt:barzilai-borwein & 4.05E+09 & 4.25E+09 & 0.01 (0.001) & 1.63E+10 & 1.76E+09 & 0.02 (0.001) & 1.98E+10 & 1.56E+09 & 1.60 (0.044) & 2.48E+10 & 7.10E+08 & 119.75 (2.236) \\
 & Manopt:conjugate-gradient & 1.88E+01 & 9.28E-10 & 0.15 (0.017) & \textbf{8.81E+01} & 3.71E-02 & 8.03 (1.743) & \textbf{3.75E+02} & 4.29E-02 & 901.03 (174.999) & \textbf{8.06E+03} & 2.64E+03 & 3654.56 (8.325) \\
 & Manopt:steepest-descent & 1.88E+01 & 7.41E-10 & 0.38 (0.128) & \textbf{8.81E+01} & 5.23E+00 & 10.31 (1.460) & \textbf{3.75E+02} & 2.71E+00 & 958.70 (94.386) & \textbf{6.34E+03} & 1.69E+03 & 3643.65 (8.871) \\
 & Manopt:trust-region & \textbf{1.34E-09} & 1.88E+00 & 3.25 (0.019) & \textbf{8.81E+01} & 7.86E-04 & 66.38 (18.825) & \textbf{3.75E+02} & 1.38E-01 & 3255.01 (105.917) & 4.31E+07 & 2.78E+07 & 3926.85 (55.487) \\ \hline
\end{tabular}}
\end{table*}

\section{Simulation Study}\label{sec:simulation_study}
\noindent We conducted an extensive simulation study to evaluate the robustness and accuracy of correlation matrix estimators under various contamination settings and distributional assumptions. In addition to the classical estimator based on the Gaussian log-likelihood, we assessed three robust alternatives, each defined by a different loss function: the Huber loss, the truncated Mahalanobis loss, and Tukey’s biweight loss, as introduced in earlier sections. For all three loss functions, we employed an interquartile range (IQR)-based cutoff for outlier detection—specifically, the threshold parameter (i.e., $\delta$ in \eqref{eq:huber_likelihood}, and $\tau$ in \eqref{eq:truncated_loss} and \eqref{eq:tukey_likelihood}) was set to the third quartile plus three times the IQR.

Four types of contamination scenarios were considered:

\begin{enumerate}
    \item \emph{Row contamination}, where 10\% of the rows were randomly selected and 30--70\% of their entries were perturbed by adding $10$;
    \item \emph{Column contamination}, where 10\% of the columns were selected, with 30--70\% of their entries similarly shifted;
    \item \emph{Random contamination}, where 5\% of the elements in the data matrix were randomly chosen and each was increased by $100$;
    \item \emph{Heavy-tailed distribution}, where samples were drawn from a multivariate $t$-distribution with 3 degrees of freedom.
\end{enumerate}

Each contamination type was examined under three different true correlation structures:
\begin{itemize}
    \item A dense \emph{random (non-sparse)} correlation matrix with all non-zero entries,
    \item A \emph{sparse} matrix with 90\% zero off-diagonal entries and remaining non-zero entries drawn uniformly from $[0.1, 0.3]$,
    \item A \emph{block-Toeplitz} structure with three variable blocks (25\%, 50\%, 25\%), each exhibiting Toeplitz structure with correlation decay parameters 0.6, 0.3, and 0.4, respectively.
\end{itemize}

Simulations were performed for two dimensional settings: $(p, n) = (20, 100)$ and $(50, 500)$. For each combination of distributional scenario and correlation structure, we generated 10 independent datasets. For each dataset, the robust estimators were obtained by applying GLASD optimization separately for each loss function, initialized from 10 distinct random starting points. The solution with the minimum objective value across these runs was retained as the final estimator.

Performance was measured using the root mean squared error (RMSE) between the estimated and true correlation matrices. The reported RMSE values in Table~\ref{table:RMSE} represent the average over the 10 independently generated datasets for each setting, with standard errors shown in parentheses. The best-performing method in each setting is highlighted in bold.

Across all simulation settings summarized in Table~\ref{table:RMSE}, we observe clear advantages of robust correlation matrix estimators over the classical Gaussian likelihood-based estimator in the presence of data contamination and heavy-tailed distributions. While the Gaussian estimator performs relatively well under row and column contamination when the true correlation matrix is dense and non-sparse, its accuracy deteriorates notably in heavy-tailed settings, particularly when samples are drawn from a multivariate $t$-distribution with low degrees of freedom. The degradation is especially evident in sparse or structured (e.g., block-Toeplitz) correlation matrices, where contamination interacts more adversely with limited signal strength.

The Huber loss-based estimator consistently achieves the lowest RMSE across a wide range of contamination scenarios, demonstrating a favorable balance between robustness and statistical efficiency. It performs particularly well under row contamination and for sparse or block-structured matrices, indicating resilience to structured outliers. The estimator based on truncated Mahalanobis loss shows strong performance in scenarios involving column contamination and $t$-distributed data, outperforming other methods in these settings due to its ability to effectively limit the influence of extreme residuals. Tukey’s biweight loss, while theoretically robust, tends to yield less stable estimates in higher-dimensional settings, and its performance lags behind that of Huber and truncated losses in most scenarios. 

Overall, these results highlight the utility of combining robust loss functions with globally convergent optimization strategies such as GLASD. Robust estimators offer a practical and reliable solution when modeling correlation structures in the presence of contamination or distributional misspecification, ensuring improved accuracy across diverse structural and distributional regimes.

\begin{table}[t]
\centering
\caption{Summary of Root Mean Squared Error (RMSE) Estimates with Standard Errors for Correlation Matrix Estimators Across Contaminated Gaussian and t-distributions Under Varying Sample Sizes and true correlation matrix structures.}
  \label{table:RMSE}
\renewcommand{\arraystretch}{1.3}
\resizebox{.9\columnwidth}{!}{%
\begin{tabular}{c|c|c|cccc}
\hline
Distribution & \begin{tabular}[c]{@{}c@{}}Corr. matrix\\ type\end{tabular} & $(p,n) $ & Gaussian & Huber & Truncated & Tukey \\ \hline
\multirow{6}{*}{\begin{tabular}[c]{@{}c@{}}Gaussian,\\ Row \\ Contamination\end{tabular}} & \multirow{2}{*}{\begin{tabular}[c]{@{}c@{}}Random\\ (non-sparse)\end{tabular}} & 20, 100 & 0.220 (0.0044) & \textbf{0.207 (0.0030)} & 0.240 (0.0043) & 0.270 (0.0037) \\
 &  & 50, 500 & \textbf{0.146 (0.0010)} & 0.163 (0.0011) & 0.175 (0.0012) & 0.186 (0.0013) \\ \cline{2-7} 
 & \multirow{2}{*}{\begin{tabular}[c]{@{}c@{}}Sparse \\ Uniform\end{tabular}} & 20, 100 & 0.348 (0.0094) & \textbf{0.178 (0.0039)} & 0.229 (0.0048) & 0.231 (0.0041) \\
 &  & 50, 500 & 0.160 (0.0020) & \textbf{0.124 (0.0017)} & 0.146 (0.0014) & 0.145 (0.0012) \\ \cline{2-7} 
 & \multirow{2}{*}{\begin{tabular}[c]{@{}c@{}}Block\\ Toeplitz\end{tabular}} & 20, 100 & 0.351 (0.0097) & \textbf{0.176 (0.0044)} & 0.204 (0.0066) & 0.233 (0.0059) \\
 &  & 50, 500 & 0.136 (0.0016) & \textbf{0.111 (0.0020)} & 0.137 (0.0023) & 0.152 (0.0022) \\ \hline
\multirow{6}{*}{\begin{tabular}[c]{@{}c@{}}Gaussian,\\ Column \\ Contamination\end{tabular}} & \multirow{2}{*}{\begin{tabular}[c]{@{}c@{}}Random\\ (non-sparse)\end{tabular}} & 20, 100 & \textbf{0.226 (0.0062)} & 0.236 (0.0089) & 0.237 (0.0091) & 0.247 (0.0043) \\
 &  & 50, 500 & \textbf{0.142 (0.0016)} & \textbf{0.142 (0.0019)} & 0.143 (0.0018) & 0.174 (0.0010) \\ \cline{2-7} 
 & \multirow{2}{*}{\begin{tabular}[c]{@{}c@{}}Sparse \\ Uniform\end{tabular}} & 20, 100 & 0.152 (0.0040) & 0.153 (0.0040) & \textbf{0.151 (0.0045)} & 0.154 (0.0039) \\
 &  & 50, 500 & 0.097 (0.0010) & 0.097 (0.0017) & \textbf{0.093 (0.0008)} & 0.115 (0.0012) \\ \cline{2-7} 
 & \multirow{2}{*}{\begin{tabular}[c]{@{}c@{}}Block\\ Toeplitz\end{tabular}} & 20, 100 & \textbf{0.113 (0.0032)} & \textbf{0.113 (0.0040)} & \textbf{0.113 (0.0031)} & 0.159 (0.0048) \\
 &  & 50, 500 & 0.060 (0.0009) & 0.060 (0.0012) & \textbf{0.057 (0.0009)} & 0.119 (0.0024) \\ \hline
\multirow{6}{*}{\begin{tabular}[c]{@{}c@{}}Gaussian,\\ Random \\ Contamination\end{tabular}} & \multirow{2}{*}{\begin{tabular}[c]{@{}c@{}}Random\\ (non-sparse)\end{tabular}} & 20, 100 & \textbf{0.132 (0.0047)} & 0.133 (0.0046) & 0.137 (0.0076) & 0.213 (0.0054) \\
 &  & 50, 500 & 0.108 (0.0008) & \textbf{0.104 (0.0008)} & 0.106 (0.0007) & 0.153 (0.0011) \\ \cline{2-7} 
 & \multirow{2}{*}{\begin{tabular}[c]{@{}c@{}}Sparse \\ Uniform\end{tabular}} & 20, 100 & \textbf{0.132 (0.0026)} & \textbf{0.132 (0.0025)} & 0.143 (0.0075) & 0.196 (0.0054) \\
 &  & 50, 500 & 0.051 (0.0006) & \textbf{0.050 (0.0007)} & \textbf{0.050 (0.0003)} & 0.091 (0.0015) \\ \cline{2-7} 
 & \multirow{2}{*}{\begin{tabular}[c]{@{}c@{}}Block\\ Toeplitz\end{tabular}} & 20, 100 & 0.101 (0.0035) & \textbf{0.098 (0.0030)} & 0.112 (0.0044) & 0.152 (0.0047) \\
 &  & 50, 500 & \textbf{0.042 (0.0009)} & 0.043 (0.0014) & 0.043 (0.0007) & 0.081 (0.0022) \\ \hline
\multirow{6}{*}{\begin{tabular}[c]{@{}c@{}}t-distribution\\ (df = 3)\end{tabular}} & \multirow{2}{*}{\begin{tabular}[c]{@{}c@{}}Random\\ (non-sparse)\end{tabular}} & 20, 100 & 0.184 (0.0149) & \textbf{0.162 (0.0082)} & 0.175 (0.0105) & 0.231 (0.0057) \\
 &  & 50, 500 & 0.117 (0.0038) & \textbf{0.115 (0.0029)} & 0.140 (0.0013) & 0.169 (0.0017) \\ \cline{2-7} 
 & \multirow{2}{*}{\begin{tabular}[c]{@{}c@{}}Sparse \\ Uniform\end{tabular}} & 20, 100 & 0.213 (0.0246) & 0.183 (0.0151) & \textbf{0.135 (0.0101)} & 0.172 (0.0065) \\
 &  & 50, 500 & 0.101 (0.0050) & \textbf{0.085 (0.0018)} & 0.089 (0.0026) & 0.117 (0.0023) \\ \cline{2-7} 
 & \multirow{2}{*}{\begin{tabular}[c]{@{}c@{}}Block\\ Toeplitz\end{tabular}} & 20, 100 & 0.217 (0.0245) & 0.166 (0.0116) & \textbf{0.129 (0.0073)} & 0.173 (0.0068) \\
 &  & 50, 500 & 0.069 (0.0052) & \textbf{0.054 (0.0016)} & 0.073 (0.0034) & 0.118 (0.0038) \\ \hline
\end{tabular}}
\end{table}

\section{Case Study: Robust Correlation Estimation from Proteomic Data}\label{sec:TCPA_analysis}
\noindent We illustrate the practical utility of our method using proteomic measurements from the TCPA breast cancer dataset~\cite{li2013tcpa}, which includes abundance levels for hundreds of proteins across patient tumor samples. For this analysis, we focus on a curated subset of $p = 20$ proteins implicated in breast cancer~\cite{das2020nexus}, spanning four major biological pathways: \emph{breast reactive}, \emph{cell cycle}, \emph{hormone receptor}, and \emph{hormone signaling}. The dataset comprises $n = 879$ patient samples.

To assess potential data contamination, we first evaluate outliers using an interquartile range (IQR)-based rule applied independently to each protein. For each variable, observations falling beyond $[Q_1 - 1.5 \cdot \text{IQR}, Q_3 + 1.5 \cdot \text{IQR}]$ are flagged as outliers. The total outlier count per protein is summarized in Fig.~\ref{fig:casestudy_plot}(a), where bar colors indicate pathway membership. Several proteins, particularly from the \emph{cell cycle} and \emph{breast reactive} pathways, exhibit a large number of outlying measurements, underscoring the need for robust estimation strategies.

\begin{figure}[h]
  \centering
  \includegraphics[width=0.48\textwidth]{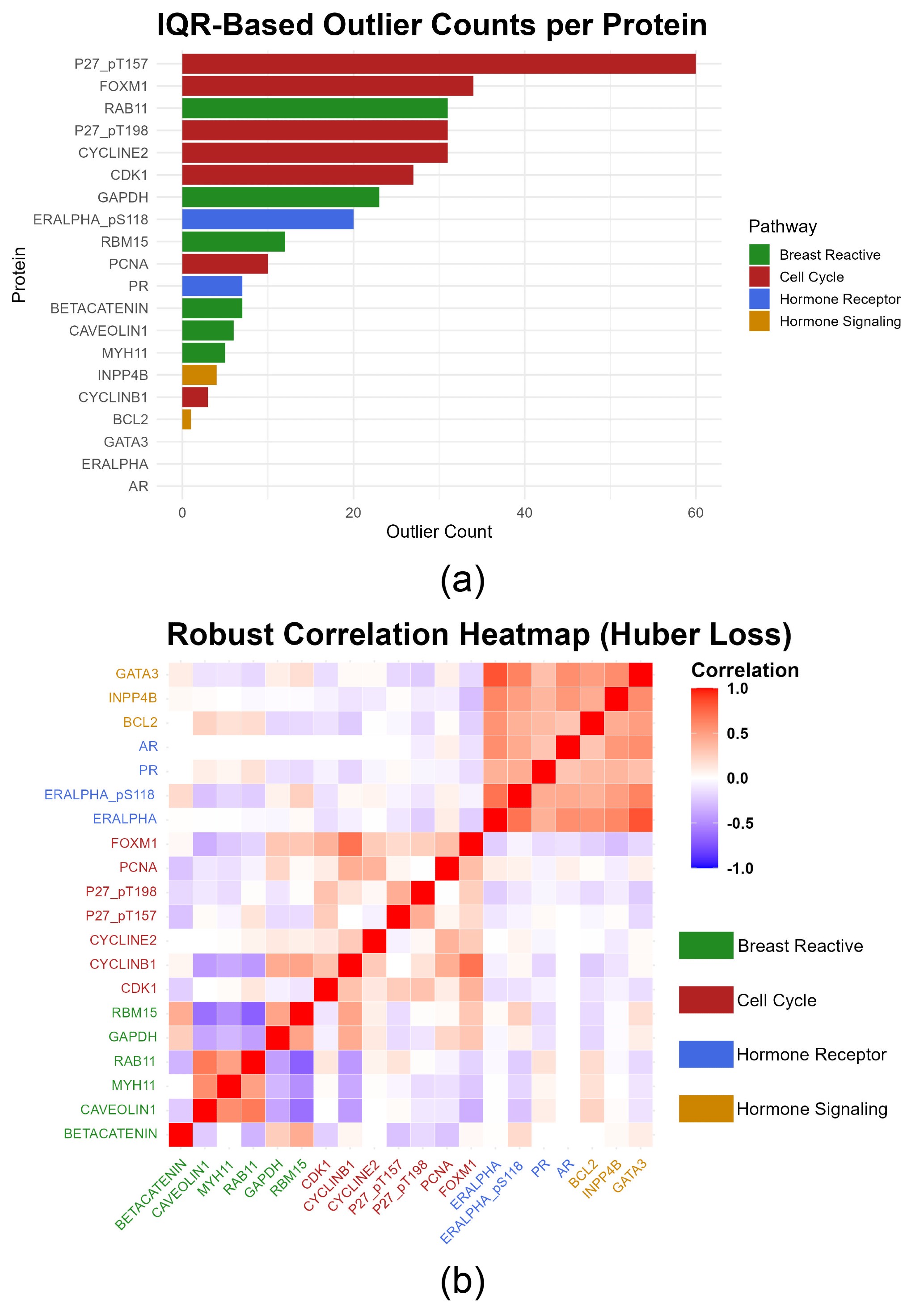}
   \caption{Figure: Protein-level expression characteristics in four major activated pathways for breast cancer. \textbf{(a)} IQR-based outlier counts across 20 proteins, colored by their associated biological pathways. \textbf{(b)} Heatmap of pairwise Pearson correlations between the same proteins, estimated robustly using Huber loss. Protein axis labels are colored according to pathway grouping. Distinct correlation blocks can be observed within pathway groups, particularly among cell cycle and hormone receptor proteins.}
  \label{fig:casestudy_plot}
\end{figure}

Using the proposed GLASD algorithm with the Huber loss function, we estimate the correlation matrix of the 20 proteins, as shown in Fig.~\ref{fig:casestudy_plot}(b). The heatmap reveals coherent within-pathway associations, with proteins belonging to the same biological module—particularly \emph{cell cycle}, \emph{hormone receptor}, and \emph{hormone signaling} pathways—exhibiting relatively stronger mutual correlations. These patterns give rise to a block-diagonal structure, suggesting modular organization among pathway-specific components. In contrast, inter-pathway correlations tend to be weaker and more heterogeneous, reflecting limited co-regulation across distinct functional axes. The \emph{breast reactive} proteins exhibit comparatively less internal coherence, potentially due to pathway heterogeneity or measurement variability. While no direct comparison to non-robust estimates is shown, the application of a robust loss function provides a principled approach to limit the impact of extreme observations on inferred correlation structure, enabling more stable and interpretable network summaries in the presence of outliers.

\section{Conclusion}\label{sec:conclusion}
In this work, we developed a novel optimization algorithm, GLASD (\textit{Global Adaptive Stochastic Descent}), designed to perform robust correlation matrix estimation under non-convex loss functions and positive definiteness constraints. By leveraging a fully gradient-free, globally explorative, and coordinate-adaptive search scheme, GLASD overcomes the limitations of classical constrained optimization techniques and local descent-based solvers, particularly in high-dimensional and contaminated data settings.

We demonstrated the efficacy of GLASD across a range of synthetic benchmarks and real-world proteomics applications. Specifically, we applied GLASD to minimize robust loss functions such as Huber, truncated Mahalanobis, and Tukey's biweight, showcasing improved resilience to outliers and heavy-tailed noise compared to the Gaussian log-likelihood approach. A key enabler of this flexibility was a novel transformation that reparameterizes the space of correlation matrices onto a hyper-rectangle, making it compatible with general-purpose black-box optimization algorithms.

Theoretical guarantees on global convergence were established under mild conditions, and further insights into the deterministic variant of the algorithm (ASD) were provided under convexity assumptions. Empirically, GLASD consistently outperformed or matched classical solvers and manifold-based methods on stress-test functions defined over the correlation matrix manifold, highlighting its versatility and robustness.

Taken together, GLASD provides a powerful and general-purpose framework for constrained optimization on correlation matrices, particularly useful in settings where the objective is non-convex, non-differentiable, or defined only via simulations. Future work includes extending GLASD to other matrix manifolds (e.g., precision matrices or low-rank structures), exploring adaptive cooling schedules for faster convergence, and developing scalable implementations for large-scale biological network estimation.

\section*{Software}
\noindent The GLASD optimization algorithm, implemented in MATLAB, is available on GitHub at \href{https://github.com/priyamdas2/GLASD}{https://github.com/priyamdas2/GLASD}.

\section*{Acknowledgments}
PD was partially supported by NIH/NCI Cancer Center Support Grant P30 CA016059.

\vspace{1pt}
\begin{IEEEbiography}[{\includegraphics[width=1in,height=1.2in,clip,keepaspectratio]{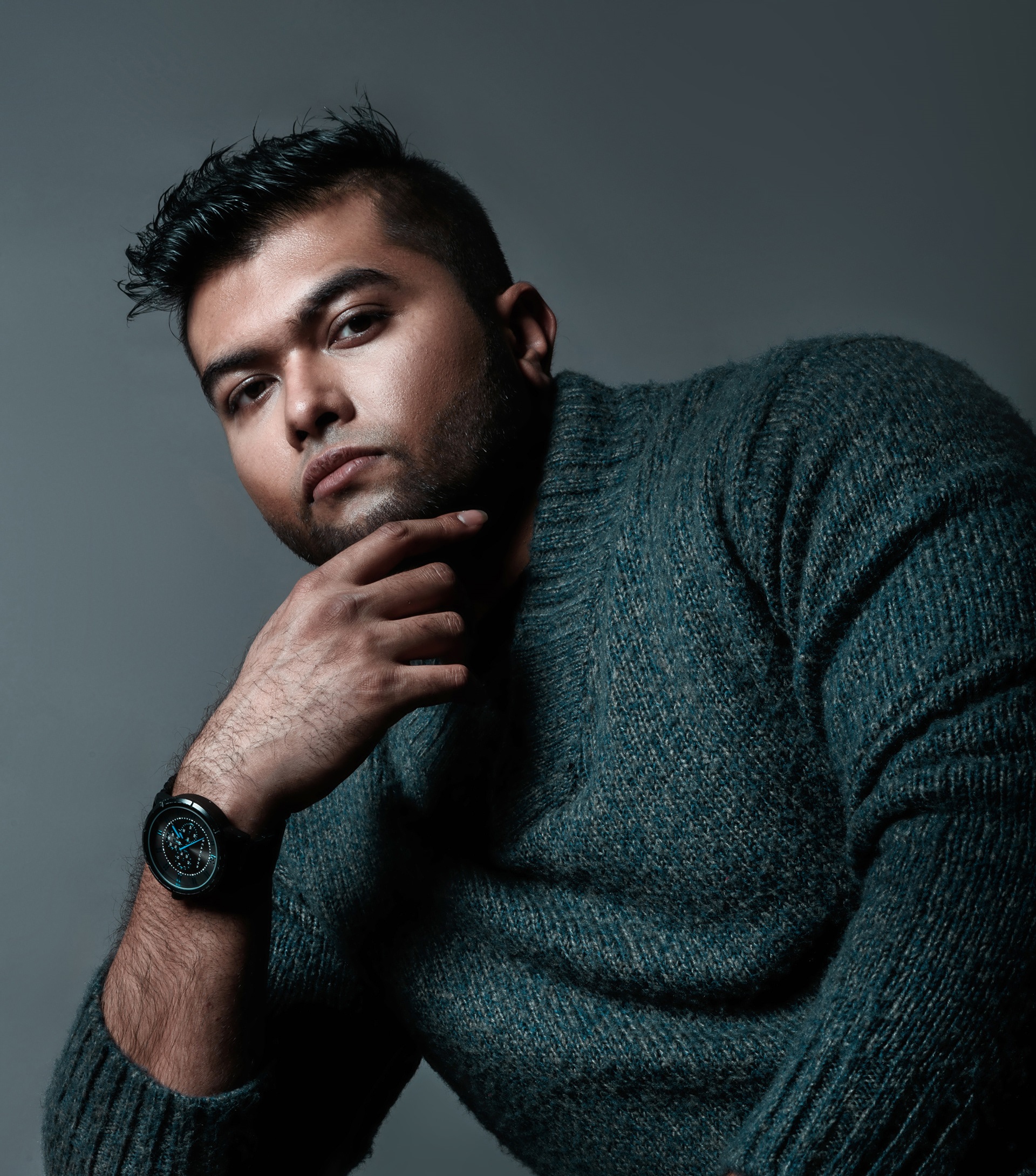}}]{Priyam Das} received the B.Stat. and M.Stat. degrees in statistics from the Indian Statistical Institute, Kolkata, India, in 2011 and 2013, respectively, and the Ph.D. degree in statistics from North Carolina State University, Raleigh, NC, USA, in 2016.

Dr. Das is currently an Assistant Professor in the Department of Biostatistics at Virginia Commonwealth University and a member of the Massey Cancer Center. His research interests include statistical machine learning, black-box optimization, graphical models, high-dimensional inference, and applications in cancer genomics and biomedical network analysis.
\end{IEEEbiography}



\vfill


\begin{thebibliography}{1}
\bibliographystyle{IEEEtran}

\bibitem{shafer2015genomics}
A.~T.~Shafer and R.~P.~Adams, ``A framework for robust high-dimensional correlation estimation in genomics,'' \textit{Bioinformatics}, vol.~31, no.~9, pp.~1452--1460, 2015.

\bibitem{bickel2008covariance}
P.~J.~Bickel and E.~Levina, ``Covariance regularization by thresholding,'' \textit{Ann. Statist.}, vol.~36, no.~6, pp.~2577--2604, 2008.

\bibitem{smith2011network}
S.~M.~Smith \textit{et al.}, ``Network modelling methods for FMRI,'' \textit{NeuroImage}, vol.~54, no.~2, pp.~875--891, 2011.

\bibitem{fan2016overview}
J.~Fan, Y.~Liao, and H.~Liu, ``An overview of the estimation of large covariance and precision matrices,'' \textit{The Econometric Journal}, vol.~19, no.~1, pp.~C1--C32, 2016.

\bibitem{lam2016robust}
C. Lam, ``Robust mean and eigenvalue regularized covariance matrix estimation,'' \textit{Electronic Journal of Statistics}, vol. 10, no. 2, pp. 4059--4094, 2016.

\bibitem{huber1964robust}
P. J. Huber, ``Robust estimation of a location parameter,'' \textit{Ann. Math. Statist.}, vol. 35, no. 1, pp. 73--101, 1964.

\bibitem{fan2018robust}
J. Fan, W. Wang, and Y. Zhong, ``Robust covariance estimation for approximate factor models,'' \textit{Journal of Econometrics}, vol. 208, no. 1, pp. 5--22, 2019.

\bibitem{beaton1974}
A. Beaton and J. Tukey, ``The fitting of power series, meaning polynomials, illustrated on band-spectroscopic data,'' \textit{Technometrics}, vol.~16, no.~2, pp.~147--185, 1974.

\bibitem{hardin2007robust}
J.~Hardin, A.~Mitani, L.~Hicks, and M.~VanKoten, ``A robust measure of correlation between two genes on a microarray,'' \textit{BMC Bioinformatics}, vol.~8, no.~220, 2007.

\bibitem{absil2009optimization}
P.-A. Absil, R. Mahony, and R. Sepulchre, \textit{Optimization Algorithms on Matrix Manifolds}. Princeton, NJ, USA: Princeton University Press, 2009.

\bibitem{boumal2023introduction}
N. Boumal, \textit{An Introduction to Optimization on Smooth Manifolds}. Cambridge, U.K.: Cambridge University Press, 2023.

\bibitem{kerr2018optimization}
C. C. Kerr, S. Dura-Bernal, T. G. Smolinski, G. L. Chadderdon, and D. P. Wilson, ``Optimization by adaptive stochastic descent,'' \textit{PLOS ONE}, vol. 13, no. 3, pp. e0192944, 2018.

\bibitem{geman1984stochastic}
S. Geman and D. Geman, ``Stochastic relaxation, Gibbs distributions, and the Bayesian restoration of images,'' \textit{IEEE Transactions on Pattern Analysis and Machine Intelligence}, vol. PAMI-6, no. 6, pp. 721--741, 1984.

\bibitem{hajek1988cooling}
B. Hajek, ``Cooling schedules for optimal annealing,'' \textit{Mathematics of Operations Research}, vol. 13, no. 2, pp. 311--329, 1988.

\bibitem{li2013tcpa}
J.~Li \textit{et al.}, ``TCPA: a resource for cancer functional proteomics data,'' \textit{Nat. Methods}, vol.~10, no. 11, pp.~1046--1047, 2013.


\bibitem{surjanovic2013virtual}
S.~Surjanovic and D.~Bingham, ``Virtual Library of Simulation Experiments: Test Functions and Datasets,'' 2013. [Online]. Available: \url{https://www.sfu.ca/~ssurjano/optimization.html}

\bibitem{molga2005test}
M.~Molga and C.~Smutnicki, ``Test functions for optimization needs,'' Tech. Rep., Poznan University of Technology, 2005.

\bibitem{kirkpatrick1983optimization}
S.~Kirkpatrick, C.~D.~Gelatt, and M.~P.~Vecchi, ``Optimization by simulated annealing,'' \textit{Science}, vol.~220, no.~4598, pp.~671--680, 1983.

\bibitem{holland1992adaptation}
J.~H.~Holland, \textit{Adaptation in Natural and Artificial Systems}. Cambridge, MA: MIT Press, 1992.

\bibitem{kennedy1995particle}
J.~Kennedy and R.~Eberhart, ``Particle swarm optimization,'' in \textit{Proc. IEEE Int. Conf. Neural Networks}, 1995, vol.~4, pp.~1942--1948.

\bibitem{byrd2000trust}
R.~H.~Byrd, J. C. Gilbert, and J.~Nocedal, ``A trust region method based on interior point techniques for nonlinear programming,'' \textit{Mathematical Programming}, vol.~89, no.~1, pp.~149--185, 2000.

\bibitem{nocedal2006numerical}
J.~Nocedal and S.~Wright, \textit{Numerical Optimization}, 2nd ed. New York, NY, USA: Springer, 2006.

\bibitem{torczon1997convergence}
V.~Torczon, ``On the convergence of pattern search algorithms,'' \textit{SIAM Journal on Optimization}, vol.~7, no.~1, pp.~1--25, 1997.

\bibitem{blumenson1960}
L.~E.~Blumenson, ``A derivation of $n$-dimensional spherical coordinates,'' \textit{Amer. Math. Monthly}, vol.~67, no.~1, pp.~63--66, 1960.

\bibitem{boumal2014manopt}
N. Boumal, B. Mishra, P.A. Absil, and R. Sepulchre (2014). ``Manopt, a Matlab Toolbox for Optimization on Manifolds.'', \textit{Journal of Machine Learning Research}, vol.~15, no.~42, pp.~1455--1459, 2014.



\bibitem{das2020nexus}
P.~Das, C.~B.~Peterson, K.~A.~Do, R.~Akbani, and V.~Baladandayuthapani, ``NExUS: Bayesian simultaneous network estimation across unequal sample sizes,'' \textit{Bioinformatics}, vol.~36, no.~3, pp.~798--804, 2020.

\end{thebibliography}
\end{document}